  \def \VersionLong {} 
  \def \AuthorVersion {}
\newenvironment{ienumeration}
	{\ifdefined\VersionLong\begin{enumerate}\else\begin{inparaenum}[\itshape i\upshape)]\fi}
	{\ifdefined\VersionLong\end{enumerate}\else\end{inparaenum}\fi}
\footnotesize\printfield{doi}}
\tikzstyle{every node}=[initial text=]
\tikzstyle{location}=[rectangle, rounded corners, minimum size=12pt, draw=black, fill=blue!10, inner sep=2pt]
\tikzstyle{invariant}=[yshift=3, rectangle, draw=black, fill=white, text=black, inner sep=1pt]
\tikzstyle{final}=[double]
\tikzstyle{nongardee}=[dashed]
\tikzstyle{success}=[fill=green!50]
\tikzstyle{failure}=[fill=red!50]
\tikzstyle{urgent}=[fill=yellow, densely dotted]
\tikzstyle{invariant}=[draw=black, dotted, inner sep=1pt] 
\tikzstyle{gentil}=[fill=yellow]
\tikzstyle{pasgentil}=[fill=white]
\newcommand{\gentilinline}[1]{\ensuremath{\colorbox{yellow}{\ensuremath{#1}}}}
	\definecolor{coloract}{rgb}{0.50, 0.70, 0.30}
	\definecolor{colorclock}{rgb}{0.4, 0.4, 1}
	\definecolor{colordisc}{rgb}{1, 0, 1}
	\definecolor{colorloc}{rgb}{0.4, 0.4, 0.65}
	\definecolor{colorparam}{rgb}{1, 0.6, 0.0}
\newcommand{\styleact}[1]{\ensuremath{\textcolor{coloract}{\mathrm{#1}}}}
\newcommand{\styleclock}[1]{\ensuremath{\textcolor{colorclock}{#1}}}
\newcommand{\styleloc}[1]{\ensuremath{\mathrm{#1}}}
\newcommand{\styleparam}[1]{\ensuremath{\textcolor{colorparam}{#1}}}
\newcommand{\init}{_0}
\newcommand{\A}{\ensuremath{\mathcal{A}}}
\newcommand{\Actions}{\Sigma}
\newcommand{\action}{\ensuremath{a}}
 \newcommand{\AP}{\ensuremath{\mathit{AP}}}
 \newcommand{\atomicprop}{\ensuremath{\textit{ap}}}
\newcommand{\Clock}{\mathbb{X}} 
\newcommand{\ClockCard}{H} 
\newcommand{\clock}{x} 
\newcommand{\clocky}{y} 
\newcommand{\clockz}{z} 
\newcommand{\clockval}{w} 
\newcommand{\ClocksZero}{\vec{0}}
 \newcommand{\cm}{\ensuremath{\mathtt{c}}} 
 \newcommand{\cms}{\ensuremath{\mathtt{q}}} 
 \newcommand{\cmshalt}{\ensuremath{\cms_\textrm{halt}}} 
 \newcommand{\twoCM}{two-counter machine}
\newcommand{\compOp}{\bowtie}
\newcommand{\compOpLeq}{\triangleleft}
\newcommand{\CTrue}{\text{true}}
 \newcommand{\locerror}{\ensuremath{\loc_{\mathtt{error}}}}
 \newcommand{\lochalt}{\ensuremath{\loc_{\mathtt{halt}}}}
\newcommand{\edge}{e}
\newcommand{\Edges}{E}
\newcommand{\formule}{\ensuremath{\varphi}} 
\newcommand{\longuefleche}[1]{\stackrel{#1}{\longrightarrow}}
\newcommand{\longueflecheRel}[1]{\stackrel{#1}{\mapsto}}
\newcommand{\flecheRel}{{\rightarrow}}
\newcommand{\grandn}{{\mathbb N}}
\newcommand{\grandq}{{\mathbb Q}}
\newcommand{\grandqplus}{\grandq_{+}} 
\newcommand{\grandr}{{\mathbb R}}
\newcommand{\grandrplus}{\grandr_{+}} 
\newcommand{\grandz}{{\mathbb Z}}
\newcommand{\guard}{g}
  \newcommand{\invariant}{I}
\newcommand{\Label}{\mathit{lb}} 
\newcommand{\loc}{\ell} 
\newcommand{\locinit}{\loc\init}
\newcommand{\Loc}{L} 
\newcommand{\locfinal}{\loc_f}
 \newcommand{\Lab}{\mathbf{L}} 
\newcommand{\Param}{\mathbb{P}} 
\newcommand{\param}{p} 
\newcommand{\parama}{a} 
\newcommand{\pcc}{\mathit{pcc}} 
\newcommand{\ParamCard}{M} 
\newcommand{\pval}{v} 
\newcommand{\pvalone}{\pval_{1}} 
\newcommand{\resets}{R}
\newcommand{\T}[1]{\ensuremath{T(#1)}}
\newcommand{\uPTAs}{\textcolor{colorok}{U-PTAs}}
\newcommand{\PTAi}{\textcolor{colorok}{PTA\ensuremath{_{I}}}}
\newcommand{\PTAis}{\textcolor{colorok}{PTAs\ensuremath{_{I}}}}
\newcommand{\PTAiu}{\textcolor{colorok}{PTA\ensuremath{_I^U}}}
\newcommand{\PTAius}{\textcolor{colorok}{PTAs\ensuremath{_I^U}}}
\newcommand{\sinit}{s\init} 
\newcommand{\state}{\ensuremath{s}} 
\newcommand{\States}{S} 
\newcommand{\varrun}{\rho} 
 \newcommand{\gentil}{{\gentilinline{\circ}}} 
 \newcommand{\mechant}{\ensuremath{\circ}}
\newcommand{\reset}[2]{\ensuremath{[#1]_{#2}}}
\newcommand{\valuate}[2]{\ensuremath{#2(#1)}}
\newcommand{\imitator}{\textsf{IMITATOR}}
\newcommand{\Time}{\mathsf{time}}
\newcommand{\defProblem}[3]
{%
\noindent\fcolorbox{black}{blue!15}{
	\begin{minipage}{.95\columnwidth}
		\textbf{#1 problem:}\\
		\textsc{Input}: #2\\
		\textsc{Problem}: #3
	\end{minipage}
}

	\smallskip

}
\definecolor{darkblue}{rgb}{0.0,0.0,0.6}
\definecolor{darkgreen}{rgb}{0, 0.5, 0}
\definecolor{darkpurple}{rgb}{0.7, 0, 0.7}
\definecolor{violetcurie}{RGB}{115,26,67}
\definecolor{forestgreen}{rgb}{0.13,0.54,0.13}
\definecolor{darkblue}{rgb}{0, 0, 0.7}
\crefname{line}{\text{line}}{\text{lines}} 
	\newcommand{\LongVersion}[1]{#1}
	\newcommand{\ShortVersion}[1]{}
	\newcommand{\LongVersion}[1]{}
	\newcommand{\ShortVersion}[1]{#1}
\newcommand{\VeryLongVersion}[1]{}
\newtheorem{lemma}{Lemma}
\newtheorem{theorem}{Theorem}
\newtheorem{definition}{Definition}
\newtheorem{example}{Example}
\newcommand{\gennote}[3]{\todo[linecolor=#2,backgroundcolor=#2!25,bordercolor=#2]{#3: #1}\xspace}
\newcommand{\ea}[1]{\gennote{#1}{blue}{ÉA}}
\newcommand{\mr}[1]{\gennote{#1}{orange}{$\mathbb{MR}$}}
\newcommand{\dl}[1]{{\gennote{#1}{purple}{DL}}}
\newcommand{\instructions}[1]{{\gennote{\bfseries #1}{red}{Instructions}}}
	\newcommand{\todoinline}[1]{\mbox{}{\color{red}{\textbf{TODO}\ifx#1\\\else:\ \fi #1}}} 
	\newcommand{\todoinline}[1]{}
\newcommand{\styleTCTL}[1]{\ensuremath{\mathsf{#1}}}
\newcommand{\EF}{\styleTCTL{EF}}
\newcommand{\EG}{\styleTCTL{EG}}
\newcommand{\AF}{\styleTCTL{AF}}
\newcommand{\AG}{\styleTCTL{AG}}
\newcommand{\notresuperformule}{\ensuremath{\EG \AF_{=0}}}
 	\definecolor{colorok}{RGB}{80,80,150}
	\definecolor{colorok}{RGB}{0,0,0}
\newcommand{\eg}{\textcolor{colorok}{e.\,g.,}\xspace}
\newcommand{\ie}{\textcolor{colorok}{i.\,e.,}\xspace}
\newcommand{\st}{\textcolor{colorok}{s.t.}\xspace}
\newcommand{\wlogen}{\textcolor{colorok}{w.l.o.g.}\xspace}
\newcommand{\wrt}{\textcolor{colorok}{w.r.t.}\xspace}
\begin{document}

\title{On the expressive power of invariants in parametric timed automata\thanks{%
	\ifdefined\AuthorVersion
	This is the author version of the manuscript of the same name published in the proceedings of the 24th International Conference on Engineering of Complex Computer Systems (\href{https://www.formal-analysis.com/iceccs/2019/}{ICECCS 2019}).
	The final version is available at 
		\href{https://ieeexplore.ieee.org/}{\nolinkurl{ieeexplore.ieee.org}}.
	\fi%
	This work is partially supported by the ANR national research program PACS (ANR-14-CE28-0002)
	and
	by ERATO HASUO Metamathematics for Systems Design Project (No.\ JPMJER1603), JST.
	}
}

\author{\IEEEauthorblockN{\'Etienne Andr\'e}
\IEEEauthorblockA{\textit{Universit\'e Paris 13, LIPN, CNRS,} \\
\textit{UMR 7030, F-93430,}\\
Villetaneuse, France\\
JFLI, CNRS, Tokyo, Japan\\
National Institute of Informatics, Tokyo, Japan
}
\and
\IEEEauthorblockN{Didier Lime}
\IEEEauthorblockA{\textit{\'Ecole Centrale de Nantes, LS2N, CNRS,} \\
\textit{UMR 6004,}\\
Nantes, France
}
\and
\IEEEauthorblockN{Mathias Ramparison}
\IEEEauthorblockA{\textit{Universit\'e Paris 13, LIPN, CNRS,} \\
\textit{UMR 7030, F-93430}\\
Villetaneuse, France
}
}

\maketitle

\thispagestyle{plain}

\begin{abstract}
The verification of systems combining hard timing constraints with concurrency is challenging.
This challenge becomes even harder when some timing constants are missing or unknown.
Parametric timed formalisms, such as parametric timed automata (PTAs), tackle the synthesis of such timing constants (seen as parameters) for which a property holds.
  Such formalisms are highly expressive, but also\mr{enlevé: highly} undecidable\dl{juste undecidable? Que veut dire le highly?}\ea{je sais pas trop ; on peut enlever ? 'fin l'idée c'était que plein de choses étaient indécidables}, and few decidable subclasses were proposed.
We propose here a syntactic restriction on PTAs consisting in removing guards (constraints on transitions) to keep only invariants (constraints on locations).
While this restriction preserves the expressiveness of PTAs (and therefore their undecidability), an additional restriction on the type of constraints allows to not only prove decidability, but also to perform the exact synthesis of parameter valuations satisfying reachability.
This formalism, that seems trivial at first sight as it benefits from the decidability of the reachability problem with a better complexity than Timed Automata (TAs), suffers from the undecidability of the whole TCTL logic that TAs, on the contrary enjoy.
We believe our formalism allows for an interesting trade-off between decidability and practical expressiveness and is therefore promising.
We show its applicability in a small case study.\ea{OK?}

\end{abstract}

\instructions{ICECCS 10 pages double column, including references and appendix}

\ea{hello}
\dl{hello}
\mr{hello}


 \todo{This is the version with comments. To disable comments, comment out line~3 in the \LaTeX{} source.}

\section{Introduction}\label{section:introduction}

The verification of systems combining hard timing constraints with concurrency is challenging.
This challenge becomes even harder when some timing constants are missing or unknown.
Parametric timed formalisms tackle the synthesis of such timing constants (seen as parameters) for which a property holds.
A well-known such formalism is parametric timed automata~\cite{AHV93}, a formalism extending finite-state automata with clocks~\cite{AD94}, that can be compared to either integer constants or to integer-valued or real-valued parameters along guards (over transitions) or in invariants (in locations).
Such formalisms are highly expressive, but also highly undecidable, and only a few decidable subclasses were proposed.


In the PTA literature, the main problem studied is \EF{}-emptiness (``is the set of valuations for which a given location is reachable for at least one run empty?''): it is ``robustly'' undecidable in the sense that, even when varying the setting, undecidability is preserved.
For example, \EF{}-emptiness is undecidable even for a single bounded parameter~\cite{Miller00}, even for a single rational-valued or integer-valued parameter~\cite{BBLS15}, even with only one clock compared to parameters~\cite{Miller00}, or with strict constraints only~\cite{Doyen07} (see~\cite{Andre19STTT} for a survey).
%
Decidability can be obtained using two main directions.

First, reducing the number of clocks may lead to decidability:
	for example, decidability is ensured in some restrictive settings such as over discrete time with a single parametric clock (\ie{} compared to parameters in at least one guard)~\cite{AHV93}, or over discrete or dense time with one parametric clock and arbitrarily many non-parametric clocks~\cite{BO14,BBLS15}, or over discrete time with two parametric clocks and a single parameter~\cite{BO14}.
	But the practical power of these restrictive settings remains unclear.

Second, restricting the syntax may also lead to decidability, notably on two main subclasses:
in~\cite{HRSV02}, \emph{L/U-PTAs} are proposed as a subclass where parameters are partitioned into upper-bound parameters (only compared to clocks as upper-bounds, \ie{} of the form $\clock > \param$ or $\clock \geq \param$, where $\clock$ is a clock and~$\param$ a parameter\dl{pas clair: de quel côté on met paramètres et horloges? je mettrais directement ``$x<p$ or $x\leq p$ where x is a clock and p a parameter''}\ea{fixed}) and lower-bound parameters.
While L/U-PTAs benefit from the decidability of \EF{}-emptiness~\cite{JLR15,BlT09}, \AF{}-emptiness (``is the set of valuations for which a given location is reachable for all runs empty?'')\ is undecidable~\cite{JLR15}; even more annoying, it is impossible to achieve exact synthesis, even for \EF{}: that is, it is not possible in general to compute the set of parameter valuations for which a given location is reachable.
A second restriction of the syntax is proposed in~\cite{ALR19}: in \emph{reset-PTAs}, whenever a clock is compared to a parameter, all clocks must be reset (possibly to parameters, which extends the original PTA syntax).
While exact synthesis over bounded rational-valued parameters can be achieved for \EF{}, resetting all clocks as soon as one clock is compared to a parameter is a strong practical restriction, and is dedicated to systems that have some cyclic, repetitive behavior.\ea{OK? Délicat de critiquer trop frontalement notre précédent papier mais…}\mr{ça va c'est cool quand meme}\dl{Nice.}

\paragraph{Contribution}
In this work, we propose an original subclass of parametric timed automata, with interesting practical results.
We restrict the expressive power by disallowing guards in the model, therefore leaving the model with only invariants.

On the one hand, we show that this model of PTAs with only invariants (\PTAis{}) is at least as expressive as the original PTAs, and therefore inherits its notorious undecidability results.

On the other hand, by restraining the shape of the constraints in these invariants, giving PTAs with only invariants and upper-bound constraints (\PTAius{}), we get decidability results independently of the number of clocks or parameters used.
In addition, we show that we can synthesize the exact set of parameters for which reachability (\EF{}) 
properties hold.
This result is particularly welcome, as existing classes for which decidability of the emptiness problems hold does usually not guarantee the possibility to perform synthesis: the best-known existing subclass of PTAs, \ie{} L/U-PTAs, benefit from decidability results~\cite{HRSV02,BlT09} but synthesis cannot be achieved, even over integer-valued parameters~\cite{JLR15}.

Our formalism of \PTAius{} is the first of its kind to allow for exact synthesis over unbounded, rational-valued parameters (in contrast to~\cite{HRSV02,BlT09,ALR19}) without imposing conditions on the number of clocks or parameters (in contrast to~\cite{BO14,BBLS15}), nor imposing frequent resets (in contrast to~\cite{ALR19}).
This makes this formalism promising, together with a still interesting expressive power.
In fact, we show that for more complex properties (\eg{} nested TCTL formulas), \PTAius{} become undecidable, which shows that our formalism is far from featuring a trivial expressiveness.\mr{super}
We also exemplify our formalism on a case study, where we model a data streaming protocol using \PTAius{}.

\paragraph{Outline}
\cref{section:preliminaries} recalls the necessary preliminaries, introduces the class of PTAs without guards (\PTAis{}) and the problems of interest.
\cref{section:undecidability} proves that reachability 
is undecidable for \PTAi{}.
\cref{section:decidability} introduces an additional restriction (\PTAius{}), and proves decidability of the emptiness problems of reachability
, together with the possibility to perform synthesis.
In contrast, we show that TCTL-emptiness is undecidable for \PTAius{}, making it an expressive formalism at the border between decidability and undecidability.
\cref{section:casestudy} exemplifies our formalism on a case study.
\cref{section:conclusion} concludes the paper and proposes some perspectives.

\section{Preliminaries}\label{section:preliminaries}

\LongVersion{
\subsection{Clocks, parameters and parametric clock constraints}
}


We assume a set~$\Clock = \{ \clock_1, \dots, \clock_\ClockCard \} $ of \emph{clocks}, \ie{} real-valued variables that evolve at the same rate.
A clock valuation is\LongVersion{ a function}\dl{ça ne semble pas très bien marcher l'espace initial dans LongVersion}\ea{en fait longeversion + commentaires c'est moche, mais sans les commentaires, tout va bien. Bon, j'ai simplifié puisque de toute façon on ne fait que du long ici}
$\clockval : \Clock \rightarrow \grandrplus$.
\LongVersion{We identify a clock valuation~$\clockval$ with the point $(\clockval(\clock_1), \dots, \clockval(\clock_{\ClockCard}))$ of $\grandrplus^\ClockCard$.
}
We write $\ClocksZero$ for the clock valuation assigning $0$ to all clocks.
Given $d \in \grandrplus$, $\clockval + d$ \ShortVersion{is}\LongVersion{denotes the valuation} \st{} $(\clockval + d)(\clock) = \clockval(\clock) + d$, for all $\clock \in \Clock$.
Given $\resets \subseteq \Clock$, we define the \emph{reset} of a valuation~$\clockval$, denoted by $\reset{\clockval}{\resets}$, as follows: $\reset{\clockval}{\resets}(\clock) = 0$ if $\clock \in \resets$, and $\reset{\clockval}{\resets}(\clock)=\clockval(\clock)$ otherwise.

We assume a set~$\Param = \{ \param_1, \dots, \param_\ParamCard \} $ of \emph{parameters}\LongVersion{, \ie{} unknown constants}.
A \emph{parameter valuation} $\pval$ is\LongVersion{ a function} $\pval : \Param \to \grandqplus$.

We assume
	${\compOp} \in \{<, \leq, =, \geq, >\}$
	and
	${\compOpLeq} \in \{<, \leq\}$.
A \emph{parametric clock constraint}~$\pcc$ is a constraint over $\Clock \cup \Param$ defined by a set of inequalities of the form
	 $\clock \compOp \sum_{1 \leq i \leq \ParamCard} \alpha_i \param_i + d$, with $\alpha_i \in \{0, 1 \}$ and $d \in \grandz$.
Given~$\pcc$, we write~$\clockval\models\pval(\pcc)$ if 
the expression obtained by replacing each~$\clock$ with~$\clockval(\clock)$ and each~$\param$ with~$\pval(\param)$ in~$\pcc$ evaluates to true.
%


%

\LongVersion{
\subsection{Parametric timed automata}
}

Let $\AP$ be a set of atomic propositions.
We first recall PTAs~\cite{AHV93}.

\begin{definition}\label{def:PTA}
	A PTA
	$\A$ is a tuple \mbox{$\A = (\Actions, \Loc, \Lab, \locinit, \Clock, \Param, \invariant, \Edges)$}, where:
  \begin{itemize}
		\item $\Actions$ is a finite set of actions,
		\item $\Loc$ is a finite set of locations,
    \item $\Lab$ is a label function~$\Lab : \Loc \to 2^\AP$,
		\item $\locinit \in \Loc$ is the initial location,
		\item $\Clock$ is a finite set of clocks,
		\item $\Param$ is a finite set of parameters,
		\item $\invariant$ is the invariant, assigning to every $\loc\in \Loc$ a parametric clock constraint $\invariant(\loc)$,
		\item $\Edges$ is a finite set of edges (or transitions)  $\edge = (\loc,\guard,\action,\resets,\loc')$
		where
		$\loc,\loc'\in \Loc$ are the source and target locations, $\action \in \Actions$, $\resets\subseteq \Clock$ is a
		set of clocks to be reset, and
		the guard $\guard$ is a parametric clock constraint.
\end{itemize}
\end{definition}

Given\LongVersion{ a parameter valuation}~$\pval$, we denote by $\valuate{\A}{\pval}$ the non-parametric structure where all occurrences of a parameter~$\param_i$ have been replaced by~$\pval(\param_i)$.
We denote as a \emph{timed automaton} any structure $\valuate{\A}{\pval}$.\footnote{%
	Technically and strictly speaking, we should use a rescaling of the constants to avoid comparisons of clocks with rationals: by multiplying all constants in $\valuate{\A}{\pval}$ by the least common multiple of their denominators, we obtain an equivalent (integer-valued) TA\LongVersion{, as defined in \cite{AD94}}.}
A \emph{bounded} PTA{} is a PTA{} with a bounded parameter domain that assigns to each parameter a minimum integer bound and a maximum integer bound.
That is, each parameter~$\param_i$ ranges in an interval $[a_i, b_i]$, with $a_i,b_i \in \grandn$.
Hence, a bounded parameter domain is a hyperrectangle of dimension $\ParamCard$.

Let us first recall the concrete semantics of TAs.

\begin{definition}[Concrete semantics of a TA]
	Given a PTA $\A = (\Actions, \Loc, \Lab, \locinit, \Clock, \Param, \invariant, \Edges)$,
	and a parameter valuation~\(\pval\),
	the concrete semantics of $\valuate{\A}{\pval}$ is given by the timed transition system $(\States, \sinit, \flecheRel)$, with
	\begin{itemize}
		\item $\States = \{ (\loc, \clockval) \in \Loc \times \grandrplus^\ClockCard \mid \clockval \models \valuate{\invariant(\loc)}{\pval} \}$, 
		\item $\sinit = (\locinit, \ClocksZero) $
		\item  $\flecheRel$ consists of the discrete and (continuous) delay transition relations:
				\begin{itemize}
			\item discrete transitions: $(\loc,\clockval) \longueflecheRel{\edge} (\loc',\clockval')$, 
				if $(\loc, \clockval) , (\loc',\clockval') \in \States$, there exists $\edge = (\loc,\guard,\action,\resets,\loc') \in \Edges$, $\clockval'= \reset{\clockval}{\resets}$, and $\clockval \models \valuate{\guard}{\pval}$. 
			\item delay transitions: $(\loc,\clockval) \longueflecheRel{d} (\loc, \clockval+d)$, with $d \in \grandrplus$, if $\forall d' \in [0, d], (\loc, \clockval+d') \in \States$.
		\end{itemize}
	\end{itemize}
\end{definition}

    Moreover we write $(\loc, \clockval)\longuefleche{\edge} (\loc',\clockval')$\LongVersion{ for a combination of a delay and discrete transition where
	$((\loc, \clockval), \edge, (\loc', \clockval')) \in \flecheRel$} if
		$\exists d, \clockval'' :  (\loc,\clockval) \longueflecheRel{d} (\loc,\clockval'') \longueflecheRel{\edge} (\loc',\clockval')$.


Given a TA~$\valuate{\A}{\pval}$ with concrete semantics $(\States, \sinit, \flecheRel)$, we refer to the states of~$\States$ as the \emph{concrete states} of~$\valuate{\A}{\pval}$.
A \emph{run} of~$\valuate{\A}{\pval}$ is a possibly infinite alternating sequence of states of $\valuate{\A}{\pval}$ and edges starting from the initial state $\sinit$ of the form
$\sinit \longuefleche{\edge_0} \state_1\longuefleche{\edge_1} \cdots \longuefleche{\edge_{m-1}} \state_m \longuefleche{\edge_{m}} \cdots$, such that for all
$i = 0, 1, \dots$, $\edge_i \in \Edges$, and $(\state_i , \edge_i , \state_{i+1}) \in \flecheRel$.
Given\LongVersion{ a state}~$\state=(\loc, \clockval)$, we say that $\state$ is reachable
if $\state$ appears in a run of $\valuate{\A}{\pval}$,
or simply that~$\loc$ is reachable in~$\valuate{\A}{\pval}$, if there exists a state $(\loc,\clockval)$ that is reachable.
By extension, we say that a label $\Label$ is reachable in~$\valuate{\A}{\pval}$ if there exists a state $(\loc,\clockval)$ that is reachable such that $\Label \in \Lab(\loc)$.

Given a parameter valuation~$\pval$ and a run of~$\valuate{\A}{\pval}$ $\varrun=(\loc_{0}, \clockval_{0})\longuefleche{\edge_{0}}\cdots\longuefleche{\edge_{i-1}}(\loc_{i}, \clockval_{i})\longuefleche{\edge_{i}}(\loc, \clockval)$ we define the length of a run as the number of edges in~$\varrun$.\ea{j'ai viré la notation $|\varrun|$, non utilisée}


A \emph{maximal} run is a run that\LongVersion{ is} either \LongVersion{infinite (\ie{} }contains an infinite number of discrete transitions\LongVersion{)}, or that cannot be extended by a discrete transition.
Given a run~$\rho$ of~$\valuate{\A}{\pval}$, $\Time(\rho)$ gives the total sum of the delays~$d$ along~$\rho$.


\subsection{A new syntactic restriction}

We now introduce the first main restriction of our formalism, that consists in removing guards from PTAs.

\begin{definition}
	A \emph{PTA with only invariants (\PTAi{})} is a PTA where, in each transition, $\guard$ is always true, \ie{} is an empty set of inequalities.
\end{definition}

\subsection{Timed CTL}

TCTL~\cite{alur-ic-93} is the quantitative extension of CTL where temporal modalities are augmented with constraints on duration.
Formulae are interpreted over TTS.

Given~$\atomicprop \in \AP$ and $c\in\grandn$, a TCTL formula is given by the following\LongVersion{ grammar}:
\[
  \formule::= \  \top \ |\  \atomicprop \ |\  \neg{\formule}\ |\ \formule\wedge\formule\  |\  \styleTCTL{E}\formule\styleTCTL{U}_{\compOp c}\formule \ | \  \styleTCTL{A}\formule\styleTCTL{U}_{\compOp c}\formule
\]

$\styleTCTL{A}$ reads ``always'',
$\styleTCTL{E}$ reads ``exists'',
and
$\styleTCTL{U}$ reads ``until''.

Standard abbreviations include Boolean operators as well as~$\EF_{\compOp c}\formule$
for $\styleTCTL{E}\top \styleTCTL{U}_{\compOp c}\formule$, $\styleTCTL{AF}_{\compOp c}\formule$
for $\styleTCTL{A}\top \styleTCTL{U}_{\compOp c}\formule$ and $\EG_{\compOp c}\formule$
for $\neg\styleTCTL{AF}_{\compOp c}\neg\formule$.
($\styleTCTL{F}$ reads ``eventually'' while $\styleTCTL{G}$ reads ``globally''.)

\begin{definition}[Semantics of TCTL]
Given a TA $\pval(\A)$, the following clauses define when a state $\state_i$ of its TTS $(\States, \sinit, \flecheRel)$
satisfies a TCTL formula $\formule$, denoted by $\state_i\models\formule$,
by induction over the structure of $\formule$ (semantics of Boolean operators is omitted):
\begin{ienumeration}
 \item $\state_i\models\styleTCTL{E}\formule\styleTCTL{U}_{\compOp c}\Psi$
 if there is a maximal run~$\rho$ in~$\pval(\A)$
 with~$\sigma=\state_i\longuefleche {\edge_i} \cdots \longuefleche{\edge_{j-1}} \state_j$ ($i<j$)
 a prefix of~$\rho$
 \st{}~$\state_j\models\Psi$,
 $\Time(\sigma)\compOp c$, and if~$\forall k$ \st{} $i\leq k<j$,~$\state_k\models\formule$, and
 \item $\state_i\models\styleTCTL{A}\formule\styleTCTL{U}_{\compOp c}\Psi$
  if for each maximal run~$\rho$ in~$\pval(\A)$
  there exists~$\sigma=\state_i\longuefleche {\edge_i} \cdots \longuefleche{\edge_{j-1}} \state_j$ ($i<j$)
  a prefix of~$\rho$
 \st{} $\state_j\models\Psi$,
 $\Time(\sigma)\compOp c$, and if~$\forall k$ \st{} $i\leq k<j$,~$\state_k\models\formule$.
\end{ienumeration}
\end{definition}

In $\styleTCTL{E}\formule\styleTCTL{U}_{\compOp c}\Psi$
the classical until is extended by requiring that $\formule$ be
satisfied within a duration (from the current state) verifying the constraint ``$\compOp c$''.
Given~$\pval$, a \PTAiu{} $\A$ and a TCTL formula~$\formule$,
we write~$\pval(\A) \models \formule$ when~$\sinit \models\formule$.

We define \emph{flat TCTL} as the subset of TCTL where, in $\styleTCTL{E}\formule\styleTCTL{U}_{\compOp c}\formule$ and $\styleTCTL{A}\formule\styleTCTL{U}_{\compOp c}\formule$, $\formule$ must be a formula of propositional logic (a Boolean combination of atomic propositions).

\subsection{Problems}

In this paper, we address the following problems:

\smallskip

\defProblem
	{TCTL-emptiness}
	{a \PTAi{}~$\A$ and a TCTL formula~$\varphi$}
	{is the set of valuations~$\pval$ such that $\valuate{\A}{\pval} \models \varphi$ empty?}

\defProblem
	{TCTL-synthesis}
	{a \PTAi{}~$\A$ and a TCTL formula~$\varphi$}
	{synthesize the set of valuations~$\pval$ such that $\valuate{\A}{\pval} \models \varphi$.}


We will focus notably on the TCTL formula ``\EF{}'' expressing \emph{reachability}~\cite{AD94}.
That is, \EF{}-emptiness asks whether the set of parameter locations for which a given location is reachable for at least one run is \emph{empty} or not.
Similarly, \EF{}-synthesis asks to \emph{synthesize} these valuations.

\section{The power of invariants in PTAs}\label{section:undecidability}

In this section, we show that the expressive power of invariants in PTAs is surprisingly high:
in fact, we show that a PTA with guards but without invariants can be transformed to an equivalent \PTAi{}.
As most undecidability results for PTAs hold even without invariants, our transformation shows that \PTAi{} are (at least) as expressive as PTAs---and therefore as undecidable too.
Notably, the simplest problem for PTAs (\EF-emptiness) is undecidable for \PTAis{}.

\subsection{Transforming guards into invariants}

Let us describe our transformation from a PTA $\A$ without invariants to a \PTAi{}~$\T{\A}$.
For each edge~$\edge = (\loc_1,\guard,\action,\resets,\loc_2)$ of \A{}, we add in~$\T{\A}$ a new location~$\loc_1'$ with invariant~$\invariant(\loc_1')=\guard$ and replace~$\edge$ with a transition that is always true from~$\loc_1$ to~$\loc_1'$ with action~$\action$ and no reset: $\edge' = (\loc_1,\CTrue,\action,\emptyset,\loc_1')$. Then we add a unique transition from~$\loc_1'$ to~$\loc_2$ that is always true, without action and with the original resets~$\resets$ of~$\edge$: $\edge'' = (\loc_1',\CTrue,\epsilon,\resets,\loc_2)$ ($\epsilon$ denotes the silent action; note that actions do not matter much in our setting anyway as we are concerned with reachability properties).

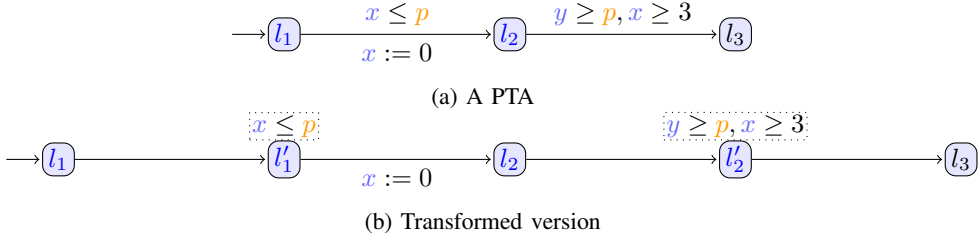
\begin{figure*}[tb]

\begin{subfigure}[b]{0.95\textwidth}
	 \centering
	\begin{tikzpicture}[shorten >=1pt, node distance=1.3cm and 3cm, on grid, auto]
	\node[location, initial, initial text={$\styleclock{x}:=0$}] (A)   {$\color{blue}l_{1}$};
	\node[location]          (B)   [right=of A]           {$\color{blue}l_{2}$};
	\node[location]          (C)   [right=of B]           {$l_{3}$};

	\path[->]
						(A)  edge   node[above] {$\styleclock{\clock}\leq\styleparam{\param}$}
						node[below] {$\styleclock{\clock}:= 0$} (B)
						(B)  edge   node[above] {$\styleclock{\clocky}\geq\styleparam{\param}, \styleclock{\clock}\geq 3$}   (C)	;

	\end{tikzpicture}
		\caption{A PTA}
		\label{fig:PTAPTAi:PTA}

\end{subfigure}

\begin{subfigure}[b]{0.95\textwidth}
	 \centering
	\begin{tikzpicture}[shorten >=1pt, node distance=1.3cm and 3cm, on grid, auto]
	\node[location, initial, initial text={$\styleclock{x}:=0$}] (A)   {$\color{blue}l_{1}$};
	\node[location] (A1) [right=of A]  {$\color{blue}l_{1}'$};
	\node [invariant,above] at (A1.north) {$\styleclock{x}\leq \styleparam{\param}$};
	\node[location]          (B)   [right=of A1]           {$\color{blue}l_{2}$};
	\node[location]          (B1)   [right=of B]           {$\color{blue}l_{2}'$};
		\node [invariant,above] at (B1.north) {$\styleclock{\clocky}\geq\styleparam{\param}, \styleclock{\clock}\geq 3$};
	\node[location]          (C)   [right=of B1]           {$l_{3}$};

	\path[->]
						(A)  edge   node[above] {} (A1)
						(A1) edge node[below] {$\styleclock{\clock}:= 0$} (B)
						(B)  edge   node[above] {}   (B1)
						(B1)  edge   node[above] {}   (C)	;

	\end{tikzpicture}
		\caption{Transformed version}
		\label{fig:PTAPTAi:transformed}
\end{subfigure}

 \caption{An example of PTA without invariant and its equivalent \PTAi{}.}
 \label{fig:PTAPTAi}
 \end{figure*}

\begin{example}
	An example of this transformation is given in \cref{fig:PTAPTAi}.
	The transition (say~$\edge$) from~$\loc_1$ to~$\loc_2$ in \cref{fig:PTAPTAi:PTA} is translated into
	\begin{inparaenum}[1)]%
		\item a new transition from~$\loc_1$ to a new location~$\loc_1'$ with as invariant the guard of the original transition~$\edge$, \ie{} $\clock \leq \param$, and
		\item a new transition from~$\loc_1'$ to~$\loc_2$ with the same reset as the one of the original transition~$\edge$, \ie{} $\clock := 0$.
	\end{inparaenum}
	This translation is exemplified in \cref{fig:PTAPTAi:transformed}.

	The guard on the transition from~$\loc_2$ to~$\loc_3$ is translated similarly.
\end{example}

\subsection{Characterization of the transformation}

We show that, for any run of~$\valuate{\A}{\pval}$, there exists in~$\valuate{\T{\A}}{\pval}$ a run twice as long, whose states of index~$2 \times i$ are identical to states of index~$i$ of the original run, for each~$i$ between~0 and the length of the run minus~1.

\begin{lemma}\label{lemma:runequivPTAsi}
Let~$\A$ be a PTA without invariant, and~$\pval$ a parameter valuation.
There is a run~$\varrun=(\loc_{0}, \clockval_{0})\longuefleche{\edge_{0}}\cdots\longuefleche{\edge_{i-1}}(\loc_{i}, \clockval_{i})\longuefleche{\edge_{i}}(\loc, \clockval)\cdots$ in~$\valuate{\A}{\pval}$ iff there is a run~$\varrun'=(\loc_{0}, \clockval_{0})\longuefleche{\edge_{0}'}(\loc_{0}', \clockval_{0}')\longuefleche{\edge_{0}''}\cdots\longuefleche{\edge_{i-1}''}(\loc_{i}, \clockval_{i})\longuefleche{\edge_{i}'}(\loc_{i}', \clockval_{i}')\longuefleche{\edge_{i}''}(\loc, \clockval)\cdots$
in~$\valuate{\T{\A}}{\pval}$.
\end{lemma}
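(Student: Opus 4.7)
The plan is to prove the equivalence by a straightforward induction on the length of the runs (extended in the standard way to infinite runs, by applying the argument to every finite prefix). The key invariant to maintain is that after matching the $i$-th discrete step, the state $(\loc_i, \clockval_i)$ at position $2i$ of $\varrun'$ coincides with the state at position $i$ of $\varrun$, so that the two induction hypotheses can be chained together.

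For the forward direction ($\Rightarrow$), I would proceed as follows. Suppose $\varrun$ takes a combined delay-discrete step $(\loc_i, \clockval_i) \longuefleche{\edge_i} (\loc_{i+1}, \clockval_{i+1})$, which by definition unfolds as a delay $d_i \in \grandrplus$ followed by a discrete transition along $\edge_i = (\loc_i, \guard_i, \action_i, \resets_i, \loc_{i+1})$, so that $\clockval_i + d_i \models \pval(\guard_i)$ and $\clockval_{i+1} = \reset{\clockval_i + d_i}{\resets_i}$. In $\T{\A}$, I build the two-step simulation by setting $\clockval_i' = \clockval_i + d_i$ and taking first $\edge_i' = (\loc_i, \CTrue, \action_i, \emptyset, \loc_i')$ after the delay $d_i$, then $\edge_i'' = (\loc_i', \CTrue, \epsilon, \resets_i, \loc_{i+1})$ after a null delay. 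I then check that $\clockval_i' \models \pval(\invariant(\loc_i'))$, which holds since $\invariant(\loc_i') = \guard_i$ and $\clockval_i + d_i \models \pval(\guard_i)$, and that the reset produces $\clockval_{i+1}$, which is immediate.

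For the backward direction ($\Leftarrow$), given the pair of steps $(\loc_i, \clockval_i) \longuefleche{\edge_i'} (\loc_i', \clockval_i') \longuefleche{\edge_i''} (\loc_{i+1}, \clockval_{i+1})$ in $\varrun'$, I would decompose each of them into its respective delay $d_i^1$ and $d_i^2$. Since $\edge_i'$ has no reset, $\clockval_i' = \clockval_i + d_i^1$; since the resulting state must satisfy $\invariant(\loc_i') = \guard_i$ (this is exactly where the invariant of the intermediate location is used to carry the guard information), we get $\clockval_i + d_i^1 \models \pval(\guard_i)$. The subsequent delay $d_i^2$ along $\loc_i'$ preserves the invariant, so $\clockval_i + d_i^1 + d_i^2 \models \pval(\guard_i)$ as well. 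I then reassemble the original step by taking delay $d_i = d_i^1 + d_i^2$ in $\valuate{\A}{\pval}$ followed by $\edge_i$, with the guard satisfied and the reset producing $\reset{\clockval_i + d_i}{\resets_i} = \clockval_{i+1}$.

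I do not anticipate any real difficulty: both directions follow directly from the construction of $\T{\A}$. The only subtlety worth stating explicitly is the role of the invariant $\invariant(\loc_i') = \guard_i$ in the backward direction, which ensures that any time spent in $\loc_i'$ (both the delay before $\edge_i''$ and the instant of taking the discrete transition) is witnessed by a valuation satisfying the original guard, so that the merging of the two consecutive steps into a single guarded transition is always legitimate. Extending the argument from finite runs to infinite ones is immediate, as the construction is purely local (one step of $\varrun$ corresponds to exactly two steps of $\varrun'$).
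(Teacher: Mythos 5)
Your proof is correct and follows essentially the same route as the paper's: an induction on the number of discrete steps, matching one delay-plus-guarded-edge of $\A$ with the two-edge detour through $\loc_i'$ in $\T{\A}$, using $\invariant(\loc_i')=\guard_i$ to transfer guard satisfaction in both directions and summing the two delays $d_i^1+d_i^2$ in the backward direction. No gap to report.
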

\begin{proof}
Let~$\varrun$ be a run of~$\valuate{\A}{\pval}$ ending in a concrete state~$(\loc,\clockval)$.
We build by induction on~$n$, a run~$\varrun'$ in $\valuate{\T{\A}}{\pval}$ of length~$2n$
 taking the same sequence of edges as~$\varrun$ \wrt\ our transformation and ending in the same concrete state%
 \footnote{Note that the fact that the length is even is a consequence of the construction: with two edges, first from~$\loc$ to~$\loc''$ and the second from~$\loc''$ to~$\loc'$, if the former can be taken then~$\invariant(\loc'')$ is satisfied, and the run cannot stay forever in~$\loc''$ because of~$\invariant(\loc'')$ and is forced to take the latter to~$\loc'$.}.

If~$n = 0$, then~$\varrun'$ consists only of the initial location of~$\T{\A}$ which has no invariant, so we can stay there forever as in the initial location of~$\A$.
So any run of length~0 of $\valuate{\T{\A}}{\pval}$ is a run of $\valuate{\A}{\pval}$ and conversely.

Suppose now that we have built~$\varrun'$ for size~$n$ and consider a run~$\varrun$ with~$n+1$ edges.
Then~$\varrun$ consists of a run~$\varrun_1$, ending in~$(\loc_1, \clockval_1)$ with~$n$ edges followed by a delay~$d$
and finally a discrete transition along the edge~$\edge$ to the concrete state~$(\loc_2, \clockval_2)$.
From the induction hypothesis, we can build an equivalent run $\varrun'_1$ in~$\T{\A}$ of length~$2n$ ending in~$(\loc_1, \clockval_1)$,
Let~$\clockval'_1$ be the clock valuation obtained from~$\clockval_1$ after the delay~$d$.
By construction, if constraints defined by the guard of~$\edge$ are satisfied by~$\clockval'_1$ then
in $\varrun'_1$, we can take the transition~$\edge'$ without guards from~$\loc_1$ to~$\loc_1'$
as~$\clockval'_1\models\pval(\invariant(\loc_1'))$.
Once in~$\loc_1'$, we cannot stay forever because of~$\invariant(\loc_1')$. We can also immediately in a $0$-delay take the transition~$\edge''$ from~$\loc_1'$ to~$\loc_2$ and  clocks in $\Clock$ are reset so~$\clockval_2=\reset{\clockval'_1}{\resets}$, and we obtain a run of length~$2(n+1)$ in~$\valuate{\T{\A}}{\pval}$ ending in~$(\loc_2, \clockval_2)$.

For the other direction, starting from a run in $\T{\A}$, the initial step of the induction is similar.
Let~$\varrun'$ be a run of~$\valuate{\T{\A}}{\pval}$ of length~$2(n+1)$
ending in a concrete state~$(\loc_2,\clockval_2)$. Then~$\varrun'$ consists of a run~$\varrun'_1$,
ending in~$(\loc_1, \clockval_1)$
with~$2n$ edges followed by a first delay~$d_1$, then a discrete transition~$\edge'$ to~$\loc_1'$, and a possible delay~$d_2$ and finally a discrete transition~$\edge''$ to~$\loc_2$. Let~$\edge$ be the edge in~$\A$ corresponding to~$\edge', \edge''$ \wrt{} our construction of~$\T{\A}$, with guard~$\guard=\invariant(\loc_1')$ and the same resets as in
$\edge''$.
Suppose now that we have built by induction hypothesis~$\varrun$ in~$\valuate{\A}{\pval}$ for size~$n$ equivalent to a run~$\varrun'_1$
in~$\valuate{\T{\A}}{\pval}$ ending in~$(\loc_1, \clockval_1)$,
Let~$\clockval'_1$ be the clock valuation obtained after the delay~$d_1$ from~$\clockval_1$ and~$\clockval''_1$ after the delay~$d_2$ from~$\clockval'_1$.
By construction, if constraints defined by~$\invariant(\loc_1')$ are satisfied by~$\clockval'_1$ then
$\clockval'_1\models\pval(\guard)$. The first transition~$\edge'$ in~$\valuate{\T{\A}}{\pval}$ to~$\loc_1'$ can be taken, similarly~$\edge$ can already be taken in~$\valuate{\A}{\pval}$. After the delay~$d_2$, we still have~$\clockval''_1\models\invariant(\loc_1')$ therefore we still have~$\clockval''_1\models\pval(\guard)$.
The second transition~$\edge''$ in~$\valuate{\T{\A}}{\pval}$ to~$\loc_2$ can be taken, similarly~$\edge$ can still be taken in~$\valuate{\A}{\pval}$.
Clocks are reset along~$\edge$ so~$\clockval_2=\reset{\clockval''_1}{\resets}$ and we obtain a run of length~$n$ in~$\valuate{\A}{\pval}$ ending in~$(\loc_2, \clockval_2)$.
\end{proof}
\subsection{Undecidability for \PTAis{}}

\begin{theorem}
	\VeryLongVersion{The }\EF{}-emptiness \VeryLongVersion{problem }is undecidable for \PTAis{}.
\end{theorem}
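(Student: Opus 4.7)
The plan is to obtain undecidability by a simple reduction from \EF{}-emptiness for plain PTAs (without invariants), using the transformation~\T{\cdot} defined in this section together with \cref{lemma:runequivPTAsi}.

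First, I would appeal to the existing undecidability results cited in the introduction (\eg{} \cite{AHV93,Miller00}) and observe that the standard proofs, which reduce from the halting problem of a \twoCM{}, can be carried out using PTAs in which \emph{no invariants} are needed: the usual encoding uses guards alone to test and decrement counters, and the halting location is reachable in the PTA iff the \twoCM{} halts. Thus \EF{}-emptiness is already undecidable when restricted to the subclass of PTAs with empty invariants in every location.

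Second, given such a PTA~$\A$ and a target location~$\lochalt$, I form $\T{\A}$ by applying the construction described above. Crucially, the transformation preserves every original location of~$\A$ (only adding fresh intermediate locations~$\loc_1'$), and by \cref{lemma:runequivPTAsi} a run of $\valuate{\A}{\pval}$ that reaches $\lochalt$ corresponds bijectively to a run of $\valuate{\T{\A}}{\pval}$ of twice the length that visits the same sequence of original locations, and conversely. Therefore, for every parameter valuation~$\pval$, $\lochalt$ is reachable in $\valuate{\A}{\pval}$ iff $\lochalt$ is reachable in $\valuate{\T{\A}}{\pval}$. In particular, the set of valuations witnessing \EF{}-reachability of~$\lochalt$ is empty in~$\A$ iff it is empty in $\T{\A}$.

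Since $\T{\A}$ is a \PTAi{} and the reduction is effective (the construction is purely syntactic and of linear size), a decision procedure for \EF{}-emptiness on \PTAis{} would yield one for \EF{}-emptiness on PTAs without invariants, contradicting the above. The only genuine point to verify is that at least one of the known undecidability proofs for PTAs carries through with empty invariants; this is the main thing to double-check but is essentially immediate from inspection of the classical \twoCM{} encoding.
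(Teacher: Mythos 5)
Your proposal is correct and follows essentially the same route as the paper: apply \cref{lemma:runequivPTAsi} to conclude that, for every valuation~$\pval$, reachability in $\valuate{\A}{\pval}$ and $\valuate{\T{\A}}{\pval}$ coincide, and then reduce from \EF{}-emptiness for PTAs without invariants, which is undecidable by~\cite{AHV93}. The only difference is that you elaborate on why the classical \twoCM{} encodings need no invariants, a point the paper simply delegates to the cited reference.
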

\begin{proof}
From \cref{lemma:runequivPTAsi}, for any \VeryLongVersion{parameter }valuation~$\pval$, reachability of a location in $\valuate{\A}{\pval}$ and $\valuate{\T{\A}}{\pval}$ is equivalent.
Therefore, \EF{}-emptiness holds for $\A$ iff \EF{}-emptiness holds for $\T{\A}$.
As \EF{}-emptiness is undecidable for PTAs without invariant~\cite{AHV93}, \EF{}-emptiness is undecidable for \PTAis{}.
%
%
\end{proof}

\section{A new decidable subclass}
\label{section:decidability}

We now consider \PTAis{} \VeryLongVersion{with only invariants of the form
$\clock \compOpLeq \sum_{1 \leq i \leq \ParamCard} \alpha_i \param_i + d$ (recall that ${\compOpLeq} \in \{ \leq, < \}$) \ie{} PTAs }with only upper-bound invariants\VeryLongVersion{ (\PTAiu{})}.

\begin{definition}
	A \emph{PTA with only upper-bound invariants (\PTAiu{})} is a \PTAi{} where each inequality in an invariant is of the form $\clock \compOpLeq \sum_{1 \leq i \leq \ParamCard} \alpha_i \param_i + d$.
\end{definition}

An example of \PTAiu{} is given in \cref{fig:videostreaming}.

\PTAius{} can be seen as a subclass of L/U-PTAs, a formalism for which \EF{}-emptiness is decidable \cite{HRSV02,BlT09} while \AF{}-emptiness is undecidable \cite{JLR15}.
In addition, the synthesis of (even integer-valued) parameters for which \EF{} holds in L/U-PTAs cannot be done~\cite{JLR15}.
\PTAius{} can also be seen as a subclass of \uPTAs{}~\cite{BlT09}, \ie{} L/U-PTAs with only upper-bound parameters, a formalism for which \EF{}-emptiness is decidable \cite{HRSV02,BlT09} while \AF{}-emptiness is open, and full TCTL-emptiness is undecidable~\cite{ALR18FORMATS}; in addition, \EF{}-synthesis of integer-valued parameter can be achieved~\cite{BlT09}, but the possibility to perform or not the exact synthesis of rational-valued parameters for \EF{} remains open.

The main differences between \PTAius{} and \uPTAs{} are
\begin{ienumeration}
	\item the absence of guards in \PTAius{}, and
	\item the possibility only for \uPTAs{} to involve constraints of the form $\clock > c$ or $\clock \geq c$ in clock constraints, provided $c$ is a constant (no parameter can be used as a lower-bound constraint).
\end{ienumeration}
In this section, we will see that these differences will allow not only for positive decidability results
but will also make exact synthesis possible.

\subsection{Reachability (\EF{})}

\subsubsection{\EF{}-emptiness}

We first show that, while matching the decidability of L/U-PTAs (and \uPTAs{}) for \EF{}-emptiness, the complexity of \EF{}-emptiness for \PTAiu{} is not the same as for \uPTAs{}, which is PSPACE-complete for integer parameter valuations~\cite{BlT09}; in our case, given a \PTAiu{}~$\A$ and a special parameter valuation~$\pvalone$ that sets all parameters to~$1$, it is sufficient to test in~$\pvalone(\A)$ the reachability of a given location in a $0$-delay (a run of duration~0), which is linear in the number of locations of~$\A$.\ea{plutôt que…?}
That is, we do not perform a symbolic analysis (using the region graph~\cite{AD94} or the zone graph~\cite{BY03}) of some TA, but we directly syntactically analyze our \PTAiu{}.

Formally, let $\pvalone$ be the parameter valuation such that $\forall 1 \leq i \leq \ParamCard : \pvalone(\param_i) = 1$.
In the following lemma, we will show that there exists a valuation $\pval$ such that there exists a run in $\valuate{\A}{\pval}$ reaching a given location $\locfinal$ iff there exists a 0-delay run in~$\valuate{\A}{\pvalone}$ reaching~$\locfinal$.
By 0-delay run, we mean for which the sum of the delays along the edges is~0.
This will allow us to only test 0-delay runs in $\valuate{\A}{\pvalone}$ to decide \EF{}-emptiness.


\begin{lemma}\label{lemma:EFzeroequiv}
	Let~$\A$ be a \PTAiu{} and~$\locfinal$ a goal location.
	There exists a parameter valuation~$\pval$ and a run in $\valuate{\A}{\pval}$ reaching $\locfinal$ iff there exists a 0-delay run in~$\valuate{\A}{\pvalone}$ reaching~$\locfinal$.
\end{lemma}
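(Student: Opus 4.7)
The plan is to prove both implications of the biconditional separately, leveraging the two distinctive features of \PTAius{}: the absence of guards (so that discrete transitions are never blocked by an enabling condition on clocks) and the upper-bound shape of invariants (so that smaller clock values are always easier to accommodate).

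The $(\Leftarrow)$ direction is immediate: any 0-delay run in $\valuate{\A}{\pvalone}$ reaching $\locfinal$ is itself a witnessing run in $\valuate{\A}{\pval}$ reaching $\locfinal$ for the choice $\pval := \pvalone$. For the $(\Rightarrow)$ direction, suppose some $\pval$ admits a run $\varrun = (\loc_0, \clockval_0) \longuefleche{\edge_0} \cdots \longuefleche{\edge_{n-1}} (\loc_n, \clockval_n)$ in $\valuate{\A}{\pval}$ with $\loc_n = \locfinal$. I would build a candidate run $\varrun'$ in $\valuate{\A}{\pvalone}$ that replays exactly the same edge sequence $\edge_0, \dots, \edge_{n-1}$, separating consecutive discrete transitions by delays of $0$. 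Because every guard is $\CTrue$ in a \PTAiu{}, no edge is blocked by an enabling condition; because every delay is $0$ and clocks either stay in place or are reset to $0$, all intermediate valuations in $\varrun'$ coincide with $\vec{0}$.

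Validity of $\varrun'$ thus reduces to verifying $\vec{0} \models \valuate{\invariant(\loc_i)}{\pvalone}$ at every visited location. Expanding each inequality $\clock \compOpLeq \sum_j \alpha_j \param_j + d$ of $\invariant(\loc_i)$ at $\clockval = \vec{0}$ under $\pvalone$, this reduces to the purely syntactic condition $0 \compOpLeq \sum_j \alpha_j + d$. The hardest step---and what I expect to be the main obstacle---is to derive this condition from the existence of $\varrun$: a priori it only yields $\sum_j \alpha_j \pval(\param_j) + d \geq \clockval_i(\clock) \geq 0$, an inequality involving $\pval$ rather than $\pvalone$. I would close the gap by exploiting the restricted shape of \PTAiu{} invariants (coefficients $\alpha_j \in \{0,1\}$, integer offset $d$) together with a monotonicity argument on parameter valuations, allowing me to conclude that $\varrun'$ is a well-defined 0-delay run in $\valuate{\A}{\pvalone}$ reaching $\locfinal$.
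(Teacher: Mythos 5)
Your overall route is the paper's: replay the witnessing run edge-for-edge as a 0-delay run, note that guards are all true and every clock valuation along the way is $\ClocksZero$, and reduce validity to checking each visited invariant at the zero valuation under $\pvalone$. (The paper factors this into two steps --- first a 0-delay run under the \emph{same} $\pval$, then a transfer from $\pval$ to $\pvalone$ --- but that decomposition is cosmetic.) The genuine gap is exactly where you say it is, and the tool you propose to close it does not work. Monotonicity of upper-bound parameters only says that \emph{enlarging} parameter values preserves runs, so it lets you pass from $\pval$ to $\pvalone$ only when $\pval(\param) \leq 1$ for every $\param$; nothing in the hypothesis bounds $\pval$ by $1$, so monotonicity points the wrong way in precisely the cases that matter. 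What actually closes the step in the paper is an observation specific to 0-delay runs, not monotonicity: an upper-bound inequality $\clock \compOpLeq e$ can block a 0-delay run only when its right-hand side $e$ is negative (for $\leq$) or non-positive (for $<$). The existence of $\varrun$ gives some $\clockval_i(\clock) \geq 0$ satisfying the inequality under $\pval$, so the right-hand side is non-negative (resp.\ positive) under $\pval$; the paper then argues that the only way it could become blocking when switching to $\pvalone$ is through an inequality of the form $\clock < \param$ whose parameter is set to $0$, which cannot happen since $\pvalone$ sets every parameter to $1$.

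That said, the worry you flag is substantive: the implication you need, $\sum_j \alpha_j \pval(\param_j) + d \geq 0 \Rightarrow \sum_j \alpha_j + d \geq 0$, is false for general $d \in \grandz$. Take the invariant $\clock \leq \param - 2$ with $\pval(\param) = 3$: it is satisfied at $\ClocksZero$ under $\pval$ but not under $\pvalone$, so the 0-delay run transfers under neither monotonicity nor the paper's argument. The paper's proof silently treats only inequalities of the shape $\clock < \param$ or $\clock \leq \param$, \ie{} non-negative offsets, which matches every invariant used elsewhere in the paper but not the full syntax of parametric clock constraints as defined. To make your proof complete you should therefore add (or make explicit) the hypothesis $d \geq 0$, under which the step is immediate: $\sum_j \alpha_j + d \geq d \geq 0$ for $\leq$, and for $<$ either $d > 0$ or some $\alpha_j = 1$ must hold (otherwise the invariant was already unsatisfiable under $\pval$), giving strict positivity. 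Without some such restriction, no ``monotonicity argument'' will rescue the step.
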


\begin{proof}
\begin{itemize}
	\item [$\Longrightarrow$]
		Assume there exists a parameter valuation~$\pval$ and a run~$\varrun$ in $\valuate{\A}{\pval}$ reaching $\locfinal$.
		We first show that there exists a 0-delay run $\varrun_0$ in $\valuate{\A}{\pval}$ reaching $\locfinal$ (and, in fact, going through the same locations and edges as~$\varrun$, with only the delay being replaced with~0).
		This is immediate from the syntax of \PTAius{}: since we only allow invariants of the form $\clock \compOpLeq \sum_{1 \leq i \leq \ParamCard} \alpha_i \param_i + d$, then nothing can constrain a run to spend a certain amount of time in a location.
		Therefore, $\varrun_0$ can follow the same locations and edges as in~$\varrun$ without letting any time elapse.
		This gives that there exists a 0-delay run $\varrun_0$ in $\valuate{\A}{\pval}$ reaching $\locfinal$.

		We will now show that this run $\varrun_0$ is also a run of~$\valuate{\A}{\pvalone}$.
		This is not entirely immediate, as $\valuate{\A}{\pvalone}$ and $\valuate{\A}{\pval}$ have different invariants, coming from different parameter valuations.
		Indeed, in case of invariants of the form $\clock < \param$, a 0-delay run is blocked in this location whenever $\param = 0$ (as the constraint $\clock < 0$ is never satisfiable due to the non-negative nature of clocks).
		However, by definition, $\varrun_0$ does not pass through any location with an invariant of the form $\clock < \param$, with $\pval(\param) = 0$, since this is a valid run of~$\valuate{\A}{\pval}$.
		That is, for any location~$\loc$ along~$\varrun_0$ with an invariant containing an inequality of the form $\clock < \param$, $\pval(\param) > 0$.
		We can finally conclude by observing that, in $\valuate{\A}{\pvalone}$, no such invariant blocking a 0-delay run exists since, by definition of $\valuate{\A}{\pvalone}$, all parameters evaluate to~1.
		Therefore $\varrun_0$ is also a run reaching $\locfinal$ in~$\valuate{\A}{\pvalone}$.

	\item [$\Longleftarrow$] The opposite direction is trivial.
	It suffices to pick $\pval = \pvalone$ and, since there exists a 0-delay run in~$\valuate{\A}{\pvalone}$ reaching~$\locfinal$, then there exists a run (in 0-delay) in $\valuate{\A}{\pval}$ reaching $\locfinal$.
\end{itemize}

\end{proof}

From \cref{lemma:EFzeroequiv}, we state the following theorem.

\begin{theorem}\label{theorem:EFempt}
	\EF{}-emptiness is decidable in NLOGSPACE for \PTAiu{}.
\end{theorem}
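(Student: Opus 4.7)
The plan is to apply \cref{lemma:EFzeroequiv} to reduce \EF{}-emptiness to reachability in a finite directed graph derived from the syntax of~$\A$, and then invoke the classical NLOGSPACE bound for graph reachability.

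First I would observe that by \cref{lemma:EFzeroequiv}, \EF{}-emptiness for $\A$ amounts to deciding whether there is a $0$-delay run in $\valuate{\A}{\pvalone}$ reaching~$\locfinal$. Along a $0$-delay run starting from~$\ClocksZero$, no time ever elapses and resets map~$0$ to~$0$, so every reached clock valuation is~$\ClocksZero$. Consequently, a location~$\loc$ may appear along such a run only if $\ClocksZero \models \pvalone(\invariant(\loc))$, and every edge may be taken (guards being trivially~$\CTrue$ in a \PTAi{}). Let us call a location~$\loc$ \emph{admissible} if $\ClocksZero \models \pvalone(\invariant(\loc))$; admissibility is decided locally by evaluating each inequality $\clock \compOpLeq \sum_i \alpha_i + d$ of $\invariant(\loc)$ at the value~$0$, which only requires comparing the integer~$d$ (plus a bounded number of~$1$s) with~$0$, and is doable in logarithmic space in the size of~$\A$.

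Next I would argue the converse direction: if $\locinit$ is admissible and there is a path $\locinit = \loc_0, \loc_1, \dots, \loc_k = \locfinal$ in the underlying edge graph of~$\A$ with every~$\loc_i$ admissible, then executing the corresponding edges with zero delay everywhere yields a valid $0$-delay run in $\valuate{\A}{\pvalone}$ reaching~$\locfinal$, since all intermediate clock valuations remain~$\ClocksZero$ and all traversed invariants are satisfied there. Conversely, any $0$-delay run of $\valuate{\A}{\pvalone}$ reaching~$\locfinal$ induces such a path. Hence \EF{}-emptiness for~$\A$ is equivalent to the non-existence of a path from~$\locinit$ to~$\locfinal$ in the subgraph of~$\A$ induced by admissible locations.

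Finally I would conclude by standard complexity: graph reachability (\textsc{STCON}) is in NLOGSPACE, and the admissibility test above can be performed on-the-fly in logarithmic space as each location is visited, so the overall algorithm runs in NLOGSPACE in the size of~$\A$. The only mild subtlety is to verify that the representation of each inequality in~$\invariant(\loc)$ fits within a logspace budget, which holds because each inequality involves at most $\ParamCard$ binary coefficients~$\alpha_i \in \{0,1\}$ and a single integer constant~$d$ whose binary encoding is part of the input. No other step presents real difficulty, since the heavy lifting has been done by \cref{lemma:EFzeroequiv}.
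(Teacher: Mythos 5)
Your proof is correct and follows essentially the same route as the paper's: reduce via \cref{lemma:EFzeroequiv} to the existence of a $0$-delay run in $\valuate{\A}{\pvalone}$, characterize that as plain graph reachability restricted to locations whose invariants are satisfied by~$\ClocksZero$, and invoke the NLOGSPACE bound for directed reachability. If anything, your admissibility criterion (evaluating each invariant at~$\ClocksZero$ under~$\pvalone$) is stated slightly more carefully than the paper's phrasing, which only mentions pruning locations whose invariant contains a comparison of the form $\clock < 0$.
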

\begin{proof}
	Let~$\A$ be a PTA and~$\locfinal$ be a target location.
	From \cref{lemma:EFzeroequiv}, there exists a parameter valuation~$\pval$ and a run in $\valuate{\A}{\pval}$ reaching $\locfinal$ iff there exists a 0-delay run in~$\valuate{\A}{\pvalone}$ reaching~$\locfinal$.
	That is, it suffices to test only the existence of at least one 0-delay run in $\valuate{\A}{\pvalone}$ to decide \EF{}-emptiness in~$\A$.

	From the nature of \PTAius{}, there exists a 0-delay run in $\valuate{\A}{\pvalone}$ iff there exists in the automaton~$\valuate{\A}{\pvalone}$ seen as a graph a syntactic path from $\locinit$ to~$\locfinal$ that features no state with an invariant involving a comparison of the form~$\clock < 0$, for some~$\clock$.
	%
	We can therefore consider~$\pvalone(\A)$ as a directed graph, in which we remove all the edges to locations where there is an invariant containing a comparison of the form $\clock < 0$ for some~$\clock$.
  In this obtained oriented graph, we perform the reachability of~$\locfinal$ from~$\locinit$ which is NLOGSPACE~\cite{Papa94}, so is \EF{}-emptiness for \PTAiu{}.
\end{proof}



\subsubsection{\EF{}-synthesis}

We will show that, in order to compute \EF{}-synthesis, it suffices to test (syntactically, without semantic analysis) each automaton obtained by replacing each parameter valuation with either 0 or~1.
This is a strong result, as \EF{}-synthesis cannot be performed for L/U-PTAs with either integer or rational valued parameters~\cite{JLR15}, and can only be performed for U-PTAs over integer-valued parameters~\cite{BlT09}. 
%
\ea{ben si, non ?}\mr{non EXPTIME pour la synthèse}%
We first define an equivalence relation for parameter valuations.

\begin{definition}\label{definition:regions-EF}
	Let~$\pval, \pval'$ be two parameter valuations.
	We say that~$\pval\sim\pval'$ if, for each parameter~$\param$,
		$\pval(\param)=0$ iff~$\pval'(\param)=0$
		(\ie{} $\pval(\param)>0$ iff~$\pval'(\param)>0$).
\end{definition}

\begin{lemma}\label{lemma:efsynth}
	Let~$\A$ be a \PTAiu{} and~$\locfinal$ a goal location.
	Let $\pval,\pval'$ be two parameter valuations such that~$\pval\sim\pval'$.

	There exists a run in $\valuate{\A}{\pval}$ reaching $\locfinal$ iff there exists a 0-delay run in~$\valuate{\A}{\pval'}$ reaching~$\locfinal$.
\end{lemma}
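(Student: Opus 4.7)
The proof follows the same structure as Lemma~\ref{lemma:EFzeroequiv}, generalizing from the specific valuation $\pvalone$ to any $\pval'$ with $\pval \sim \pval'$. The strategy is again to reduce everything to 0-delay runs, where all clocks stay at $\ClocksZero$, and then to observe that invariant satisfaction at $\ClocksZero$ is determined entirely by the zero/non-zero pattern of parameters.

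For the $(\Longrightarrow)$ direction, I would proceed in two steps. First, given a run $\varrun$ in $\valuate{\A}{\pval}$ reaching $\locfinal$, observe that since \PTAius{} contain only upper-bound invariants on clocks, no invariant can ever force time to elapse in a location. One can therefore build a 0-delay run $\varrun_0$ in $\valuate{\A}{\pval}$ following exactly the same sequence of locations and edges as $\varrun$, firing each discrete transition immediately; starting from $\ClocksZero$, and with no time elapsing, all clock valuations along $\varrun_0$ remain $\ClocksZero$ throughout. Second, I would argue that the same sequence of locations and edges constitutes a valid 0-delay run in $\valuate{\A}{\pval'}$. The only condition to verify is that, at each visited location $\loc$, the invariant $\invariant(\loc)$ is satisfied at $\ClocksZero$ under $\pval'$. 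For each inequality $\clock \compOpLeq \sum_i \alpha_i \param_i + d$ of $\invariant(\loc)$, satisfaction at $\ClocksZero$ depends only on the zero/non-zero pattern of the parameters $\param_i$ with $\alpha_i = 1$: as in the proof of Lemma~\ref{lemma:EFzeroequiv}, the only way an upper-bound invariant can obstruct a 0-delay run is when a strict inequality of the form $\clock < \param$ meets $\pval(\param) = 0$. Since $\pval \sim \pval'$ ensures $\pval'(\param) = 0$ iff $\pval(\param) = 0$, the obstructing patterns coincide, so the same invariants remain satisfied under $\pval'$, and $\varrun_0$ is indeed a valid 0-delay run of $\valuate{\A}{\pval'}$ reaching $\locfinal$.

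For the $(\Longleftarrow)$ direction, the argument is symmetric. Given a 0-delay run in $\valuate{\A}{\pval'}$ reaching $\locfinal$, the same zero/non-zero invariance argument shows that the same sequence of locations and edges yields a valid 0-delay run in $\valuate{\A}{\pval}$, which in particular is a run of $\valuate{\A}{\pval}$ reaching $\locfinal$.

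The main technical point, and the step I expect to require the most care, is making rigorous the claim that satisfaction of an invariant at $\ClocksZero$ is preserved by $\sim$; beyond the paradigmatic case $\clock < \param$, one must handle inequalities mixing several parameters and an integer constant $d$, and verify that the only mechanism by which such an inequality can fail at $\ClocksZero$ is captured by the zero/non-zero pattern of the involved parameters. Once this invariance lemma is in place, the rest of the argument is a direct extension of Lemma~\ref{lemma:EFzeroequiv}.
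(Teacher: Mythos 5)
Your proof is correct and follows essentially the same route as the paper's: first reduce to a 0-delay run (possible because upper-bound invariants can never force time to elapse in a location), then observe that the only invariants capable of blocking a 0-delay run are those of the form $\clock < \param$ with the parameter valuated to~$0$, a situation preserved by~$\sim$. The technical point you flag at the end---inequalities mixing several parameters and a possibly negative integer constant~$d$---is not treated in the paper's proof either, which only discusses the paradigmatic case $\clock < \param$; note that for $d<0$ (\eg{} an invariant $\clock \leq \param - 1$) satisfaction at~$\ClocksZero$ genuinely depends on more than the zero/non-zero pattern of the parameters, so this is a gap shared by both proofs rather than a defect specific to yours.
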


\begin{proof}
	The proof reuses the same technique as in \cref{lemma:EFzeroequiv}.
\begin{itemize}
	\item [$\Longrightarrow$]
		Assume there exists a parameter valuation~$\pval$ and a run~$\varrun$ in $\valuate{\A}{\pval}$ reaching $\locfinal$.
		From the reasoning used in the proof of \cref{lemma:EFzeroequiv}, there exists a 0-delay run $\varrun_0$ in $\valuate{\A}{\pval}$ reaching $\locfinal$ (and, in fact, going through the same locations and edges as~$\varrun$, with only the delay being replaced with~0).

		We will now show that this run $\varrun_0$ is also a run of~$\valuate{\A}{\pval'}$.
		Following again the reasoning used in the proof of \cref{lemma:EFzeroequiv}, by definition, $\varrun_0$ does not pass through any location with an invariant of the form $\clock < \param$, with $\pval(\param) = 0$, since this is a valid run of~$\valuate{\A}{\pval}$.
		That is, for any location~$\loc$ along~$\varrun_0$ with an invariant containing an inequality of the form $\clock < \param$, $\pval(\param) > 0$.
		We can finally conclude by observing that, in $\valuate{\A}{\pval'}$, no such invariant blocking a 0-delay run exists since, from the fact that~$\pval\sim\pval'$, $\pval(\param) > 0$ iff $\pval'(\param) > 0$ for all~$\param$.
		Therefore $\varrun_0$ is also a run reaching $\locfinal$ in~$\valuate{\A}{\pval'}$.

	\item [$\Longleftarrow$] The opposite direction is similar.
	Since there exists a 0-delay run in $\valuate{\A}{\pval'}$, then following the same reasoning as above and since~$\pval\sim\pval'$, then this same 0-delay run is also a run of~$\valuate{\A}{\pval}$.
\end{itemize}
\end{proof}

From \cref{lemma:efsynth}, it suffices to test one valuation in each of the regions defined by \cref{definition:regions-EF}.
Each region being defined by $\pval(\param) = 0$ or $\pval(\param) > 0$, for each parameter~$\param$, it suffices to test both 0 and a non-zero value, \eg{}~1.
We end up with a set~$V$ of~$2^{| \Param |}$ parameter valuations.
This gives the following theorem.


\begin{theorem}\label{theorem:EF-synthesis}
	We can compute the set \EF{}-synthesis of parameter valuations for \PTAiu{} within exponential time \wrt{} the size of the input.
\end{theorem}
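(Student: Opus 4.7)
The plan is to reduce synthesis to $2^{|\Param|}$ independent emptiness checks, one per equivalence class of $\sim$, and then argue that the resulting union of classes is the exact synthesis set, computable in exponential time.

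First, I would observe that the relation $\sim$ of \cref{definition:regions-EF} partitions $\grandqplus^{\ParamCard}$ into exactly $2^{\ParamCard}$ equivalence classes, one per subset $S \subseteq \Param$ of parameters set to~$0$ (the remaining ones being strictly positive). By \cref{lemma:efsynth}, whether a run of $\valuate{\A}{\pval}$ reaches $\locfinal$ depends only on the $\sim$-class of $\pval$, so the synthesis set is a union of whole classes.

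Next, I would pick a canonical representative per class: for each $S \subseteq \Param$, set $\pval_S(\param)=0$ if $\param \in S$ and $\pval_S(\param)=1$ otherwise, yielding the set $V$ of $2^{\ParamCard}$ valuations announced just before the theorem. For each $\pval_S \in V$, I would apply the procedure behind \cref{theorem:EFempt} (adapted by replacing~$\pvalone$ with~$\pval_S$; the argument of \cref{lemma:EFzeroequiv} still goes through since only the zero/non-zero status of each parameter matters along a 0-delay run): it suffices to test in the underlying directed graph of $\valuate{\A}{\pval_S}$ the existence of a path from $\locinit$ to $\locfinal$ that avoids every location whose instantiated invariant contains an inequality of the form $\clock < 0$. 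This test runs in NLOGSPACE, hence in polynomial time in $|\A|$.

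Finally, I would collect the synthesized set as
\[
    \bigcup_{\pval_S \in V \text{ passing the test}} \bigl\{\, \pval \in \grandqplus^{\ParamCard} \;\big|\; \forall \param \in \Param : \pval(\param)=0 \Leftrightarrow \pval_S(\param)=0 \,\bigr\},
\]
which is a finite union of at most $2^{\ParamCard}$ convex regions, each defined by a conjunction of equalities $\param=0$ and strict inequalities $\param>0$, and can be output symbolically. The overall cost is $2^{\ParamCard}$ polynomial-time checks, hence exponential in the size of the input, as claimed.

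The main obstacle is not mathematical but representational: clarifying that the synthesis output is this finite symbolic union of rational polyhedra (rather than an enumeration of valuations), and noting that \cref{lemma:efsynth} legitimately allows us to replace the global witness $\pvalone$ of \cref{lemma:EFzeroequiv} by the class-specific witness $\pval_S$ without losing soundness or completeness.
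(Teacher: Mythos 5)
Your proposal is correct and follows essentially the same route as the paper: it uses the equivalence classes of \cref{definition:regions-EF}, tests one canonical representative per class via the syntactic graph-reachability procedure of \cref{theorem:EFempt}, and returns the union of the classes whose representative passes the test, for a total of $2^{|\Param|}$ polynomial-time checks. Your additional remarks on the symbolic representation of the output and on substituting the class representative for $\pvalone$ are sound clarifications of what the paper leaves implicit.
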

\begin{proof}
From \cref{lemma:efsynth}, given a \PTAiu{} $\A$ it suffices to test the existence of at least one 0-delay run for one parameter valuation~$\pval$ in each of the regions defined by \cref{definition:regions-EF}, \ie{} from the set~$V$.
From the proof of \cref{theorem:EFempt}, this can be achieved syntactically by solving a reachability problem in the graph of~$\pval(\A)$.
If the answer to the reachability problem is positive for this parameter valuation, the whole region is added to the result.
	That is, considering two parameters $\param_1$ and $\param_2$, and the valuation such that $\pval(\param_1) = 0$ and $\pval(\param_2) = 1$, the added region is $\param_1 = 0 \land \param_2 > 0$.
However, iterate similarly for all valuations in~$V$
gives~$2^{| \Param |}$ different valuated automata and we have to test the reachability for each of them. 
Therefore, to compute \EF-synthesis, we obtain a complexity exponential in time.
\end{proof}

%

This result makes the subclass of \PTAiu{} very interesting, as a subclass of PTAs where \EF{}-synthesis can be performed. Rare subclasses such as reset-update-to-parameter PTAs~\cite{ALR19} enjoy this possibility (and only on bounded parameters), while well-known L/U-PTAs enjoy the only decidability of \EF{}-emptiness while \EF{}-synthesis has been proven intractable~\cite{JLR15}.

\subsection{Undecidability of TCTL-emptiness}

While \EF{}-emptiness is decidable for \PTAiu{}, one can wonder whether this extends to the whole TCTL-emptiness problem.
We exhibit in this section a nested TCTL formula (by opposition to flat TCTL formula, \eg{} \EF{} or \AF{}), namely \notresuperformule{}~$\atomicprop$ for some atomic property~$\atomicprop$ and prove that~\notresuperformule{}-emptiness is undecidable for (possibly bounded) \PTAiu{}.
The formula \notresuperformule{} was already used to prove the TCTL-emptiness of U-PTAs in~\cite{ALR18FORMATS}.
This implies the undecidability of the whole TCTL-emptiness problem for (possibly bounded) \PTAiu{}.

\begin{theorem}\label{theorem:UPTA}
	The $\styleTCTL{EGAF}_{=0}$-emptiness problem is undecidable for bounded \PTAiu{}.
	\end{theorem}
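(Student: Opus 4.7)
The plan is to reduce the halting problem of a 2-counter machine to $\styleTCTL{EGAF}_{=0}$-emptiness for bounded \PTAius{}, adapting the U-PTA construction that establishes the corresponding undecidability in~\cite{ALR18FORMATS}. Concretely, given a 2-counter machine~$\mathcal{M}$, I would build a bounded \PTAiu{}~$\A$ together with a distinguished atomic proposition~$\atomicprop$ such that $\mathcal{M}$ has the desired halting behaviour iff there exists a parameter valuation~$\pval$ with $\valuate{\A}{\pval}\models\styleTCTL{EGAF}_{=0}\,\atomicprop$.

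First I would encode a configuration of~$\mathcal{M}$ by a control location of~$\A$ together with scaled clock valuations: a single small parameter~$\param$ would play the role of an infinitesimal time unit, and the value~$c_i$ of counter~$i$ would be represented by the offset between a reference clock and a counter-clock~$\clock_i$, proportional to $c_i\cdot\param$. All clocks would be kept within $[0,1]$ through upper-bound invariants of the form $\clock\leq 1$, which are permitted by the \PTAiu{} syntax. Increments and decrements would be realised by gadgets in which a controlled amount of time is forced to elapse, where the amount is capped from above by the invariants of intermediate locations. The outer $\styleTCTL{EG}$ captures an infinite, faithful simulation of~$\mathcal{M}$ along the main control flow of~$\A$.

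The main obstacle is the simulation of the zero-test of a counter: because \PTAius{} forbid guards and allow only upper-bound invariants, the automaton cannot syntactically prevent an incorrect branching choice at a conditional instruction. The key idea, already used in~\cite{ALR18FORMATS}, is to turn the inner $\styleTCTL{AF}_{=0}\,\atomicprop$ modality into a universal correctness checker evaluated at every reachable configuration. For each branching choice (\eg{} ``counter is zero'' versus ``counter is positive''), I would attach a side gadget reachable in zero time from the current configuration, whose shape ensures that some $\atomicprop$-labelled location is reachable in zero time iff the branch taken is consistent with the actual value of the corresponding counter clock. Because $\styleTCTL{AF}_{=0}$ quantifies universally over maximal runs, any attempt to cheat produces a maximal run that fails to reach~$\atomicprop$ in zero time, which disqualifies the current configuration from belonging to any $\styleTCTL{EG}$-witness.

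The remaining technical work is then to go through the U-PTA construction of~\cite{ALR18FORMATS} and replace each guarded transition by an invariant-only gadget of the kind above, while verifying that every newly introduced invariant stays of the upper-bound form $\clock\compOpLeq\sum_{1\leq i\leq\ParamCard}\alpha_i\param_i+d$ and that every parameter can be restricted to an integer interval so that boundedness is preserved. Once this is done, only faithful simulations of~$\mathcal{M}$ can witness $\styleTCTL{EGAF}_{=0}\,\atomicprop$, and such simulations exist iff $\mathcal{M}$ exhibits the corresponding halting behaviour, yielding the undecidability claim.
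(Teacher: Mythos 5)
Your construction machinery is essentially the paper's: replace each guarded edge by a location carrying the non\nobreakdash-strict upper bound as an invariant, paired with a sibling location carrying the strict version that leads to an unlabelled error location, so that firing a transition ``too early'' opens a zero\nobreakdash-time escape that falsifies the inner $\AF_{=0}\,\atomicprop$; the outer $\EG$ then retains exactly the runs that fire every transition at the last admissible instant, i.e.\ the faithful simulations. (Minor point: the gadgets being adapted are those of~\cite{ALR16ICFEM}; \cite{ALR18FORMATS} is where the formula $\notresuperformule$ was used for \uPTAs{}.) The genuine gap is your choice of source problem. You reduce from the \emph{halting} problem, but with this encoding --- counter value $\cm$ stored as a clock value $1-\pval(\param)\cm$, all clocks capped by upper-bound invariants --- satisfaction of $\notresuperformule\,\atomicprop$ characterises the existence of a valuation under which the machine can be \emph{simulated forever}, not reachability of $\cmshalt$. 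Concretely, if $\mathcal{M}$ does not halt but its counters remain bounded, you can still choose $\pval(\param)$ small enough that the unique faithful run is infinite and never blocks; that run witnesses $\notresuperformule\,\atomicprop$, so emptiness is false even though $\mathcal{M}$ does not halt. Nothing in $\EG\AF_{=0}$ forces the witness run ever to reach a halting location, so the direction ``emptiness false $\Rightarrow$ $\mathcal{M}$ halts'' fails.

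The paper avoids this by reducing from the \emph{boundedness} problem for \twoCM{}s (also undecidable): the counters remain bounded iff some sufficiently small valuation admits a complete faithful simulation --- either the machine halts and idles in $\lochalt$ forever, or it loops with bounded counters --- whereas if the counters are unbounded then, for \emph{every} valuation, the encoding $\pval(\param)\cm<1$ eventually breaks, every run is blocked by an invariant and routed to $\locerror$, and no valuation satisfies the formula. To repair your proof you should either switch to that reduction, or substantially alter the construction and formula so that satisfaction additionally forces $\cmshalt$ to be reached, which $\EG$ alone cannot do.
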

\begin{proof}


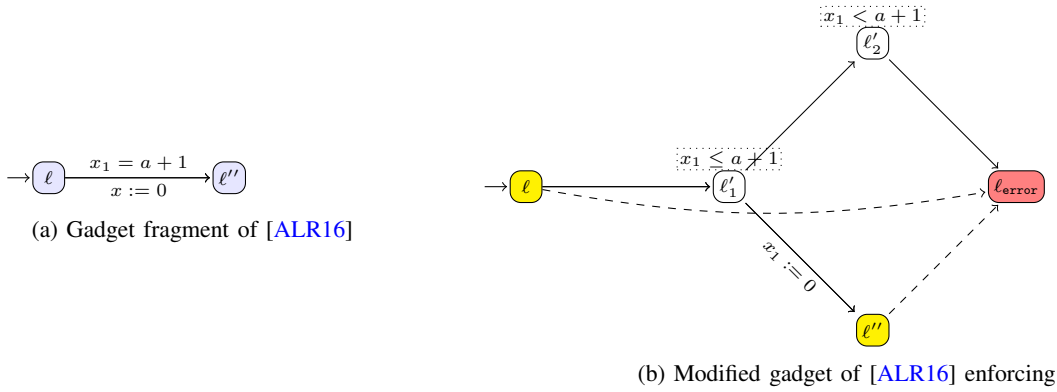
\begin{figure*}[tb]
\scriptsize
{\centering

\begin{subfigure}[c]{.3\linewidth}
	\scalebox{1}{
	\begin{tikzpicture}[shorten >=1pt, node distance=2.4cm, on grid, auto]
	\node[location, initial]   (A)      {$\loc$};
	\node[location]          (B)   [right=of A]        {$\loc''$};

	\path[->]
				(A)  edge   node  [swap, above]   {$\clock_1=\parama+1$}   (B)
				(A)  edge   node  [swap]   {$\clock:=0$}   (B);
	\end{tikzpicture}
	}
	    \caption{Gadget fragment of~\cite{ALR16ICFEM}}
	\label{figure-alr16}
    \end{subfigure}
    \hfill
   \begin{subfigure}[c]{.65\linewidth}
	\scalebox{1}{
	\begin{tikzpicture}[shorten >=1pt, node distance=2.7cm, on grid, auto]
	\node[location, gentil, initial]   (A)      {$\loc$};
	\node[location, pasgentil]          (B)   [right=of A]        {$\loc'_1$};
  \node [invariant,above] at (B.north) {$\clock_1 \leq \parama+1$};
	\node[location, gentil]          (C)   [below right=of B]        {$\loc''$};
  \node[location, pasgentil]          (D)   [above right=of B]        {$\loc'_2$};
  \node [invariant,above] at (D.north) {$\clock_1 < \parama+1$};
	\node[location, failure]          (E)   [above right=of C]        {$\locerror$};

	\path[->]
				(A)  edge   node  [swap, above]   {}   (B)
				(A)  edge   node  [swap]   {}   (B)
				(A)  edge[nongardee,bend angle=10,bend right]   node  [swap, above]   {}   (E)

				(B) edge  node[swap, above]   {}   (C)
				(B) edge  node[swap, rotate=-45, xshift=5mm]   {$\clock_1:= 0$}   (C)

				(B) edge  node[swap, above]   {}   (D)
        (D) edge node[swap, above]   {}   (E)

				(C)  edge[nongardee]   node  [swap, above]   {}   (E);
	\end{tikzpicture}
	}
	\caption{Modified gadget of~\cite{ALR16ICFEM} enforcing $\notresuperformule{}\gentil$}
	\label{figure-base-gadget}
	\end{subfigure}
	\caption{A gadget fragment and its modification into a \PTAiu{}}

}

\end{figure*}

\begin{figure*}[h!]
\scriptsize
{
	\hspace*{-2em}\begin{tikzpicture}[shorten >=1pt, node distance=2.5cm, on grid, auto]
	\node[location, initial, gentil]     (Z)   {$\loc^i$};
  \node[location, pasgentil]     (Z1)   [ right=of Z]  {$\loc_{0}^i$};
  \node [invariant,above] at (Z1.north) {$\clockz\leq 0$};

	\node[location, gentil]       (A) [right=of Z1]    {$\loc_{1}^i$};
	\node[location, pasgentil]          (B)   [above =of A]        {$\loc_{2}^i$};
  \node [invariant,above] at (B.north) {$\clock_2 \leq 1$};
  \node[location, pasgentil]          (B1)   [below right=of B]        {$\loc_{2'}^i$};
  \node [invariant,above] at (B1.north) {$\clock_2 < 1$};

  \node[location, gentil]          (C)   [above right=of B]        {$\loc_{3}^i$};
	\node[location, pasgentil]          (D)   [right=of C]        {$\loc_{4}^i$};
  \node [invariant,above] at (D.north) {$\clock_1 \leq \parama+1$};
  \node[location, pasgentil]          (D3)   [below=of D]        {$\loc_{4'}^i$};
  \node [invariant,above] at (D3.north) {$\clock_1 < \parama+1$};

  \node[location, gentil]          (D1)   [right=of D]        {$\loc_{5}^i$};
  \node[location, pasgentil, yshift=-22mm]          (D2)   [below right=of D1]        {$\loc_{6}^i$};
  \node [invariant,above] at (D2.north) {$\clockz \leq 1$};
  \node[location, pasgentil]          (D4)   [below left=of D2]        {$\loc_{6'}^i$};
  \node [invariant,above] at (D4.north) {$\clockz < 1$};

	\node[location, failure]          (E)   [below =of D3]        {$\locerror$};
	\node[location, gentil]          (I)   [right=of D2]        {$\loc^j$};

  \node[location, pasgentil]          (J)   [below =of A]        {$\loc_{7}^i$};
  \node [invariant,above] at (J.north) {$\clock_1 \leq \parama+1$};
	\node[location, pasgentil]          (G)   [right=of J]        {$\loc_{7'}^i$};
  \node [invariant,above] at (G.north) {$\clock_1 < \parama+1$};

	\node[location, gentil, yshift=-15mm]          (K)   [below right=of J]        {$\loc_{8}^i$};
	\node[location, pasgentil]          (L)   [right=of K]        {$\loc_{9}^i$};
  \node [invariant,above] at (L.north) {$\clock_2 \leq 1$};
  \node[location, pasgentil]          (L1)   [above=of L]        {$\loc_{9'}^i$};
  \node [invariant,above] at (L1.north) {$\clock_2 < 1$};
	\node[location, gentil]          (M)   [right=of L]        {$\loc_{10}^i$};

	\path[->]
				(Z)  edge   node  [swap, above]   {}   (Z1)
      	(Z1)  edge   node  [swap, above]   {}   (A)
				(Z)  edge[bend angle=18,bend right, nongardee]   node  [swap, above]   {}   (E)

				(A)  edge[nongardee]   node  [swap, above]   {}   (E)
				(A)  edge   node  [swap, above,rotate=45]   {}   (B)
				(B)  edge   node  [swap, rotate=45, xshift=-15]   {$\clock_2:=0$}   (C)
        (B)  edge   node  [swap]   {}   (B1)
        (B1)  edge   node  [swap]   {}   (E)

				(C) edge  node[swap, above]   {}   (D)
				(C)  edge[bend angle=22, nongardee]   node  [swap, above]   {}   (E)

				(D) edge  node[swap]   {$\clock_1:=0$}   (D1)
        (D) edge  node[swap]   {}   (D3)
        (D1) edge  node[swap]   {}   (D2)
        (D1)  edge[bend angle=22, nongardee]   node  [swap, above]   {}   (E)
        (D3) edge  node[swap]   {}   (E)
        (D4) edge  node[swap]   {}   (E)
        (D2) edge  node[swap]   {}   (D4)

        (J)  edge   node  [swap, above]   {}   (G)
				(G)  edge   node  [swap, above]   {}   (E)
				(D2)  edge   node  [swap]   {$\clockz:=0$}   (I)

				(A)  edge   node  [swap, above, rotate=-45, xshift=10]   {}   (J)
				(A)  edge   node  [swap, rotate=-45, xshift=15]   {}   (J)

				(J) edge  node[swap, rotate=-63]   {$\clock_1:= 0$}   (K)

				(K)  edge[nongardee]   node  [swap, above]   {}   (E)
    	(K)  edge   node  [swap, above, rotate=45]   {}   (L)
			(L)  edge   node  [swap]   {$\clock_2:=0$}   (M)
      (L)  edge   node  [swap, rotate=45, xshift=-10]   {}   (L1)
      (L1)  edge  node  [swap, above]   {}   (E)

			(M)  edge   node  [swap, rotate=45, xshift=-10]   {}   (D2)
      (M)  edge[nongardee]   node  [swap, above]   {}   (E)

				(I)  edge[bend angle=35, bend right, nongardee]   node  [swap, above]   {}   (E);

	\end{tikzpicture}

}
    \caption{increment gadget}
	\label{figure-increment-b}
\end{figure*}
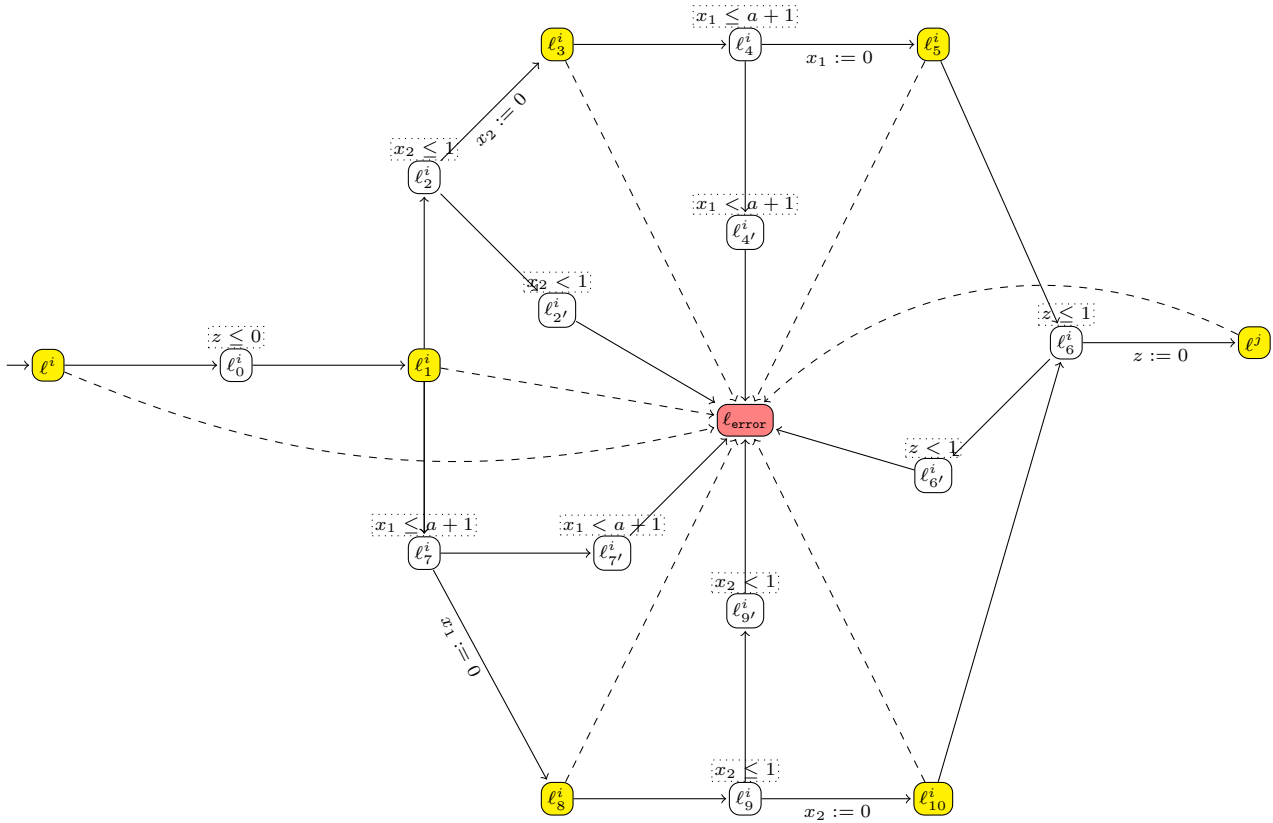


We reduce from the boundedness problem for two-counter machines (\ie{} whether the value of the counters remains bounded along the execution), which is undecidable\LongVersion{~\cite{KC10}}.
Recall that a \twoCM{} is a finite state machine with two integer-valued counters~$\cm_1, \cm_2$.
Two different instructions
are considered, we present those for~$\cm_1$, those for~$\cm_2$ are similar:
 \begin{ienumeration}
	\item when in state~$\cms_i$, increment~$\cm_1$ and go to $\cms_j$;
	\item when in state~$\cms_i$, if $\cm_1=0$ go to $\cms_k$, otherwise decrement~$\cm_1$ and go to $\cms_j$.
\end{ienumeration}%
We assume \wlogen{} that the machine halts iff it reaches a special state~$\cmshalt$.

\paragraph{General explanation of the encoding}

Let~\gentil{} and~\mechant{} be two labels.
We define a \PTAiu{} that, under some conditions, will encode the machine, and for which $\styleTCTL{EGAF}_{=0}\gentil$-emptiness holds iff the counters in the machine remain bounded.
We will reuse an encoding originally from~\cite[proof of theorem 1]{ALR16ICFEM}, and apply a few modifications.
In fact, recall that \PTAiu{} disallow the use of comparisons of the form~$\clock = \param$, or~$\clock= c$ with~$c$ a constant.

We label our transitions with: \gentil{} for the locations already present in~\cite{ALR16ICFEM} (depicted in {\gentilinline{\text{yellow}}} in our figures), and~\mechant{} for the newly introduced locations (depicted in white\LongVersion{ in our figures}).
In~\cite{ALR16ICFEM}, the gadgets use edges of the form of \cref{figure-alr16} to encode the \twoCM{} instructions.
To define a \PTAiu{}, we replace each of these edges by a special construction given in \cref{figure-base-gadget} using only inequalities of the form $\clock \leq k$ and~$\clock < k$ with~$k$ either a constant or a parameter.
Non guarded transitions are depicted as dotted edges.
We will show that a run will exactly encode the \twoCM{} if all transitions~$\clock \leq \parama+1$ (resp.\ $\clock\leq 1$) to a location labeled with~$\mechant$ are in fact taken when the clock valuation is exactly equal to~$\parama+1$ (resp.\ $1$). Those runs are further denoted by~$\rho_\gentil$.
In the transformed version given in \cref{figure-base-gadget}, due to the~$\leq$ invariant runs exist that take the guard ``too early'' (\ie{} before $\clock_1 = \parama+1$). Those are denoted by~$\rho_\mechant$.
But, in that case, observe that in $\loc_1'$, one can either take the transition to~$\loc''$ or to~$\loc_2'$ (as the invariant to satisfy is~$\clock_1<\parama+1$) and then, go to~$\locerror$.
Therefore on this gadget, \notresuperformule{}\gentil{} is true at~$\loc'$ iff the guard $\clock_1 \leq \parama+1$ from~$\loc$ to~$\loc'$ is taken at the very last
moment.
In our gadgets encoding the counters, there will be for each location with invariant~$\clock\leq k$ an associated location with invariant~$\clock < k$, with only a transition to~$\locerror$.
 Note that $\styleTCTL{AF_{=0}}\gentil{}$ is trivially true\ea{en fait seulement AF=0} in~$\loc$ and~$\loc''$ as both locations are labeled with~\gentil{}
(many runs also exist from $\loc$ to~$\locerror$ and do not encode properly the machine; they will be discarded in our reasoning later).

Our \PTAiu{}~$\A$ uses one parameter $\parama$
\LongVersion{and three parametric clocks~}\ShortVersion{,} $\clock_1, \clock_2, \clockz$.
Each state~$\cms_i$ of the \twoCM{} is encoded by a location~$\loc^i$ of~$\A$.
Each increment instruction of the \twoCM{} is encoded into a
\PTAiu{} fragment. 
The decrement instruction is a modification of the one in~\cite{ALR16ICFEM} using the same modifications as the increment gadget.

Given~$\pval$, our encoding is such that when in $\loc^i$ with $\clockval(\clockz)=0$
 then $\clockval(\clock_1)$ (resp.\ $\clockval(\clock_2)$) represents the value of the counter $\cm_1$ (resp.~$\cm_2$) encoded by~$1-\pval(\parama)\cm_1$ (resp.\ $1-\pval(\parama)\cm_2$) with~$\pval(\parama)$ small enough so~$\pval(\parama)\cm_1<1$ (resp.\ $\pval(\parama)\cm_2<1$).
The two branches in the gadgets handle both cases~$\clockval(\clock_1)>\clockval(\clock_2)$ and~$\clockval(\clock_1)\leq\clockval(\clock_2)$.

\paragraph{Increment gadget}

Depicted in \cref{figure-increment-b}. We assume~$\parama\in[0,1]$, in which case our \PTAiu{} is bounded (if $\parama$ is unbounded, then our construction proves the unbounded case).
In the following, we write~$\clockval$ as the tuple~$(\clockval(\clock_1), \clockval(\clock_2), \clockval(\clockz))$.
The initial encoding when~$\clockval(\clockz)=0$ is~$\clockval(\clock_1)=1-\pval(\parama)\cm_1, \clockval(\clock_2)=1-\pval(\parama)\cm_2, \clockval(\clockz)=0$.
  From~$\loc^i$, we prove that there is a unique run, going through the upper branch of the gadget,
  that reaches~$\loc^j$ without violating our property.
It is the one that takes each transition to a location with an invariant~$\clockz\leq 0$ at the exact moment~$\clockval(\clockz)=0$, the transition to a location with an invariant~$\clock_2\leq 1$ at the exact moment~$\clockval(\clock_2)=1$ and transition to a location with an invariant~$\clock_1\leq \parama+1$ at the exact moment~$\clockval(\clock_1)=\pval(\parama)+1$.
The other runs, that take the transitions ``too early'' are removed as they violate the property; indeed, if a run takes a transition before the ``last moment'' allowed by the invariant (\eg{} $\clock \leq 1$), then it can possibly take the successor state with invariant ($\clock < 1$) and go to \locerror{}.
That is, \notresuperformule{} does not hold, because not all runs go in 0-time to a \gentilinline{\gentil} location.

So, for each transition, many runs can take it, but we only consider from now on the only one that takes the transition at the last moment, \ie{} when the clock is exactly equal to the parameter/constant it is compared to.
The same applies at each transition.
This gives the following run for the increment gadget:

\noindent $(\gentilinline{\loc^i}, \clockval)\longuefleche{0}
(\loc_{0}^i,(1-\pval(\parama)\cm_1, 1-\pval(\parama)\cm_2, 0))\longuefleche{0}
(\gentilinline{\loc_{1}^i},(1-\pval(\parama)\cm_1, 1-\pval(\parama)\cm_2, 0))\longuefleche{\pval(\parama)\cm_2}
(\loc_{2}^i,(1-\pval(\parama)\cm_1+\pval(\parama)\cm_2, 1, \pval(\parama)\cm_2))\longuefleche{0}
(\gentilinline{\loc_{3}^i},(1-\pval(\parama)\cm_1+\pval(\parama)\cm_2, 0, \pval(\parama)\cm_2))\longuefleche{\pval(\parama)-\pval(\parama)\cm_2+\pval(\parama)\cm_1}
(\loc_{4}^i,(1+\pval(\parama), \pval(\parama)-\pval(\parama)\cm_2+\pval(\parama)\cm_1, \pval(\parama)+\pval(\parama)\cm_1))\longuefleche{0}
(\gentilinline{\loc_{5}^i},(0, \pval(\parama)-\pval(\parama)\cm_2+\pval(\parama)\cm_1, \pval(\parama)+\pval(\parama)\cm_1))\longuefleche{1-\pval(\parama)-\pval(\parama)\cm_1}
(\loc_{6}^i,(1-\pval(\parama)-\pval(\parama)\cm_1, 1-\pval(\parama)\cm_2,1))\longuefleche{0}
(\gentilinline{\loc^j}, (1-\pval(\parama)(\cm_1+1), 1-\pval(\parama)\cm_2, 0))$.

\medskip

We apply the same reasoning on the lower branch of \cref{figure-increment-b}.

\paragraph{Decrement and 0-test gadget}

The decrement and 0-test gadget, depicted in \cref{figure-decrement-b}, is similar to the one of~\cite{ALR16ICFEM} and undergoes the same modifications as in \cref{figure-increment-b}, the increment gadget.
Assume the same requirements as for the increment gadget. 
From~$\loc^i$, following the same reasoning as for the increment gadget we prove that there is a unique run, going through the upper branch of the decrement gadget,
that reaches~$\loc^j$ without violating our property.

Assume we are in\LongVersion{ a configuration}~$(\loc^i, \clockval)$
  where~$\clockval(\clockz)=0$ and suppose $\clockval(\clock_1)<1$.
  We can enter the configuration~$(\loc_{i}^1, (\clockval(\clock_1), \clockval(\clock_2), 0))$
  as the invariant~$\clockz=0$ ensures no time has elapsed; in its short form, the run that reaches~$\loc_j$ correctly, \ie{} satisfying our property \notresuperformule{} is:

\noindent $(\gentilinline{\loc^i}, \clockval)\longuefleche{0}
(\loc_{1}^i,(1-\pval(\parama)\cm_1, 1-\pval(\parama)\cm_2, 0))\longuefleche{0}
(\gentilinline{\loc_{2}^i},(1-\pval(\parama)\cm_1, 1-\pval(\parama)\cm_2, 0))\longuefleche{\pval(\parama)\cm_1}
(\loc_{3}^i,(1, 1-\pval(\parama)\cm_2+\pval(\parama)\cm_1, \pval(\parama)\cm_1))\longuefleche{0}
(\gentilinline{\loc_{4}^i},(0,1-\pval(\parama)\cm_2+\pval(\parama)\cm_1, \pval(\parama)\cm_1))\longuefleche{\pval(\parama)-\pval(\parama)\cm_1+\pval(\parama)\cm_2}
(\loc_{5}^i,(\pval(\parama)-\pval(\parama)\cm_1+\pval(\parama)\cm_2, \pval(\parama)+1, \pval(\parama)+\pval(\parama)\cm_2))\longuefleche{0}
(\gentilinline{\loc_{6}^i},(\pval(\parama)-\pval(\parama)\cm_1+\pval(\parama)\cm_2,0, \pval(\parama)+\pval(\parama)\cm_2))\longuefleche{1-\pval(\parama)\cm_2}
(\loc_{7}^i,(1-\pval(\parama)\cm_1+\pval(\parama), 1-\pval(\parama)\cm_2,\pval(\parama)+1))\longuefleche{0}
(\gentilinline{\loc^j}, (1-\pval(\parama)(\cm_1-1), 1-\pval(\parama)\cm_2, 0))$.

\medskip

We apply the same reasoning on the lower branch of \cref{figure-decrement-b}.

\paragraph{Initial gadget}

In \cref{figure-init-b}, the initial gadget ensures the same way as presented before that the counters are both initialized to~0. Recall that~$\clockval(\clock_1)=1-\pval(\parama)\cm_1$, and~$\clockval(\clock_2)=1-\pval(\parama)\cm_2$. The unique run that does not violate \notresuperformule{} reaches~$\loc_1$ exactly when~$\clockval(\clock_1)=\clockval(\clock_2)=1$, ensuring~$\cm_1=\cm_2=0$.


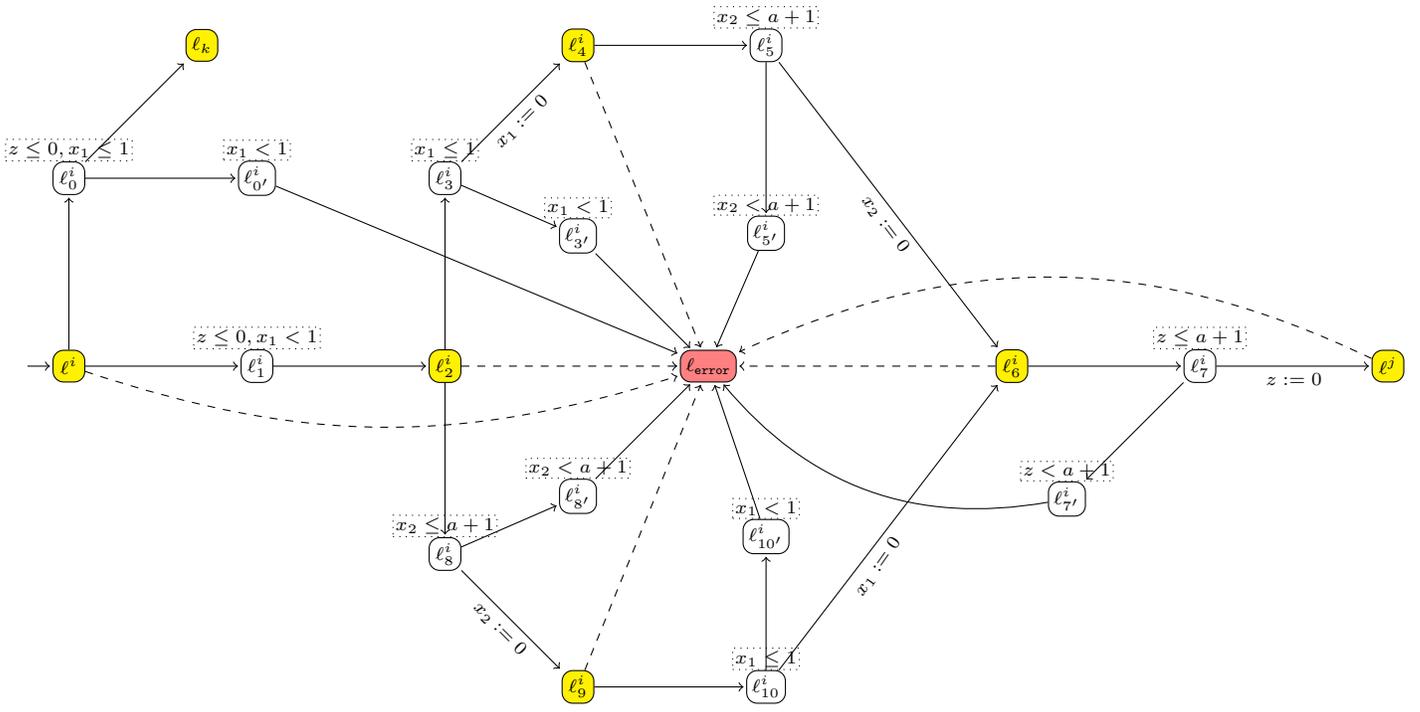
\begin{figure*}[h!]
\scriptsize
{\centering
	\hspace*{-5em}\begin{tikzpicture}[shorten >=1pt, node distance=2.5cm, on grid, auto]
	\node[location, initial, gentil]     (Z)   {$\loc^i$};
  \node[location, pasgentil]     (Z1) [above=of Z]  {$\loc_{0}^i$};
  \node [invariant,above] at (Z1.north) {$\clockz \leq 0, \clock_1\leq 1$};
  \node[location, pasgentil]     (Z2)  [right=of Z1] {$\loc_{0'}^i$};
  \node [invariant,above] at (Z2.north) {$\clock_1 < 1$};
	\node[location, gentil]       (B) [above right=of Z1]    {$\loc_{k}$};

  \node[location, pasgentil]       (A1) [right=of Z]    {$\loc_{1}^i$};
  \node [invariant,above] at (A1.north) {$\clockz \leq 0, \clock_1< 1$};
	\node[location, gentil]       (A) [right=of A1]    {$\loc_{2}^i$};

  \node[location, pasgentil]          (C1)   [above =of A]        {$\loc_{3}^i$};
  \node [invariant,above] at (C1.north) {$\clock_1 \leq 1$};
  \node[location, pasgentil]          (C2)   [below right =of C1, yshift=10mm]        {$\loc_{3'}^i$};
  \node [invariant,above] at (C2.north) {$\clock_1 < 1$};
	\node[location, gentil]          (C)   [above right=of C1]        {$\loc_{4}^i$};

	\node[location, pasgentil]          (D)   [right=of C]        {$\loc_{5}^i$};
  \node [invariant,above] at (D.north) {$\clock_2\leq \parama + 1$};
  \node[location, pasgentil]          (D1)   [below=of D]        {$\loc_{5'}^i$};
  \node [invariant,above] at (D1.north) {$\clock_2< \parama + 1$};

	\node[location, failure]          (E)   [below left=of D1, xshift=10mm]        {$\locerror$};

	\node[location, gentil]          (G)   [below right=of D1, xshift=15mm]        {$\loc_{6}^i$};
	\node[location, pasgentil]          (H)   [right=of G]        {$\loc_{7}^i$};
  \node [invariant,above] at (H.north) {$\clockz\leq \parama + 1$};
  \node[location, pasgentil]          (H1)   [below left=of H]        {$\loc_{7'}^i$};
  \node [invariant,above] at (H1.north) {$\clockz< \parama + 1$};

	\node[location, gentil]          (I)   [right=of H]        {$\loc^j$};

	\node[location, pasgentil]          (J)   [below=of A]        {$\loc_{8}^i$};
  \node [invariant,above] at (J.north) {$\clock_2\leq \parama + 1$};
  \node[location, pasgentil]          (J1)   [above right=of J, yshift=-10mm]        {$\loc_{8'}^i$};
  \node [invariant,above] at (J1.north) {$\clock_2< \parama + 1$};
	\node[location, gentil]          (K)   [below right=of J]        {$\loc_{9}^i$};
  \node[location, pasgentil]          (L)   [right=of K]        {$\loc_{10}^i$};
  \node [invariant,above] at (L.north) {$\clock_1\leq 1$};
  \node[location, pasgentil]          (L1)   [above=of L, yshift=-5mm]        {$\loc_{10'}^i$};
  \node [invariant,above] at (L1.north) {$\clock_1< 1$};

	\path[->]
				(Z)  edge   node  [swap, above]   {}   (Z1)
        (Z)  edge[bend angle=18,bend right, nongardee]   node  [swap, above]   {}   (E)

				(Z1)  edge   node  [swap, above]   {}   (Z2)
        (Z2)  edge  node  [swap, above]   {}   (E)

        (Z)  edge   node  [swap, above]   {}   (A1)
        (A1)  edge   node  [swap, above]   {}   (A)

				(Z1)  edge   node  [swap, above, rotate=45]   {}   (B)

				(A)  edge[nongardee]   node  [swap, above]   {}   (E)
				(A)  edge   node  [swap, above,rotate=45]   {}   (C1)
				(C1)  edge   node  [swap, rotate=45, xshift=-15]   {$\clock_1:=0$}   (C)
        (C1)  edge   node  [swap, rotate=45, xshift=-15]   {}   (C2)
        (C2)  edge   node  [swap, rotate=45, xshift=-15]   {}   (E)

				(C) edge  node[swap, above]   {}   (D)
				(C)  edge[bend angle=22, nongardee]   node  [swap, above]   {}   (E)

				(D) edge  node[swap, above, rotate=65]   {}   (D1)
        (D1) edge  node[swap, above, rotate=65]   {}   (E)
				(D) edge  node[swap, rotate=-53, xshift=7mm]   {$\clock_2:=0$}   (G)

				(G)  edge[nongardee]   node  [swap, above]   {}   (E)
				(G)  edge   node  [swap]   {}   (H)

				(H1)  edge[bend left, bend angle=10]   node  [swap,  above, xshift=5]   {}   (E)
				(H)  edge   node  [swap, above]   {}   (H1)
				(H)  edge   node  [swap]   {$\clockz:=0$}   (I)

				(A)  edge   node  [swap, above, rotate=-45, xshift=10]   {}   (J)

				(J) edge  node[swap, above]   {}   (J1)
        (J1) edge  node[swap, above]   {}   (E)
				(J) edge  node[swap, rotate=-45, xshift=5mm]   {$\clock_2:= 0$}   (K)

				(K)  edge[nongardee]   node  [swap, above]   {}   (E)
				(K)  edge   node  [swap, above, rotate=45, xshift=-15]   {}   (L)
        (L)  edge   node  [swap, above, rotate=45, xshift=-15]   {}   (L1)
        (L1)  edge   node  [swap, above, rotate=45, xshift=-15]   {}   (E)
				(L)  edge   node  [swap, rotate=55, xshift=-30]   {$\clock_1:=0$}   (G)

				(I)  edge[bend angle=25, bend right, nongardee]   node  [swap, above]   {}   (E);

	\end{tikzpicture}

}
	\caption{decrement gadget}
	\label{figure-decrement-b}
\end{figure*}



\begin{figure*}[h!]
\scriptsize
{\centering
	\hspace*{-5em}\begin{tikzpicture}[shorten >=1pt, node distance=2.5cm, on grid, auto]
	\node[location, initial, gentil]     (Z)   {$\loc_{\mathbf{0}}$};
  \node[location, pasgentil]     (Z1) [right=of Z]  {$\loc_{\mathbf{0}}^1$};
  \node [invariant,above] at (Z1.north) {\begin{tabular}{c}$\clockz = 0$ \\ $\clock_1\leq 1$ \\ $\clock_2\leq 1$\end{tabular}};
  \node[location, pasgentil, yshift=3mm]     (Z2)  [below right=of Z1] {$\loc_{\mathbf{0}}^2$};
  \node [invariant,above] at (Z2.north) {\begin{tabular}{c}$\clockz = 0$ \\ $\clock_1< 1$ \\ $\clock_2< 1$\end{tabular}};
  \node[location, failure]          (E)   [above right=of Z2, xshift=10mm]        {$\locerror$};

  \node[location, gentil, yshift=-10mm]       (A) [above right=of Z1]    {$\loc_{1}$};

	\path[->]
				(Z)  edge   node  [swap, above]   {}   (Z1)
        (Z)  edge[bend angle=12,bend right, nongardee]   node  [swap, above]   {}   (E)
        (Z1)  edge[bend angle=18,bend right]   node  [swap, above]   {}   (Z2)
				(Z1)  edge   node  [swap, above]   {}   (A)
        (Z2)  edge  node  [swap, above]   {}   (E)
        (A)  edge[nongardee]   node  [swap, above]   {}   (E);

	\end{tikzpicture}

}
	\caption{initialisation gadget}
	\label{figure-init-b}
\end{figure*}
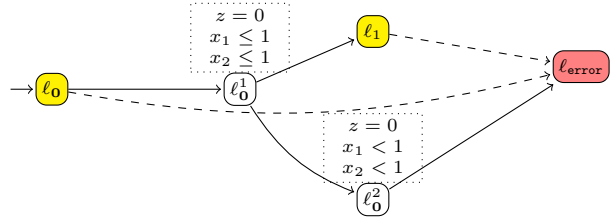


\paragraph{Simulating the 2-counter machine}

Now, let us consider the runs~$\rho_\gentil$ that take each transition to a location where there is an invariant at the very last moment; note that other runs violate the property anyway.

\begin{itemize}
	\item If the counters of the \twoCM{} remain bounded then,
	\begin{itemize}
		\item either the \twoCM{} halts by reaching~$\cmshalt$ and there exist parameter valuations~$\pval$ (typically a sufficiently small value for~$\pval(\parama)$ to encode the value of the counters during the computation).
		In the constructed \PTAiu{}, once valuated with~$\pval$ there is a (unique)\mr{oui?}\mr{ok}
		run simulating correctly the machine, reaching~\lochalt{} and staying there forever.

    In this first case, $\styleTCTL{EGAF}_{=0}\gentil$ holds for these valuations:
		hence $\styleTCTL{EGAF}_{=0}\gentil$-emptiness is false;
		\item or the \twoCM{} loops forever, never reaches~$\cmshalt$, with values of the counters remaining \emph{bounded}. There exist small parameter valuations~$\pval$ that encode the maximal value of the counters.
    In the constructed \PTAiu{}, once valuated with~$\pval$
		there is an infinite (unique)\mr{?}\mr{ok} run in the \PTAiu{} simulating correctly the machine.
		As this run is infinite, we infinitely often visit the decrement and/or the increment gadget(s).

    In this second case, $\styleTCTL{EGAF}_{=0}\gentil$ also holds for these valuations:
		hence $\styleTCTL{EGAF}_{=0}\gentil$-emptiness is again false.
      \end{itemize}
	\item Conversely, if the counters of the \twoCM{} are unbounded, then for any valuation, all runs end in~\locerror{}.
	This happens either because all the runs took on purpose an unguarded transition to~\locerror{}
	or because they blocked due to the fact that counters are unbounded, and therefore, for any arbitrarily small valuation, one of the guards will eventually block the run and send it to \locerror{} thanks to the unguarded transitions.
  That is, it is possible, \eg{} in~$\gentilinline{\loc_{5}^i}$ of \cref{figure-increment-b}, when the value of~$\clockval(\clockz)=\pval(\parama)(\cm_1+1)$ becomes strictly greater than~$1$ after a sufficient number of steps.
  It is no longer possible to take the transition to~$\loc_{6}^i$ because of the invariant~$\clockz\leq 1$
	and there is no choice other than reach~$\locerror$ again.
	Hence there is no parameter valuation
	for which $\styleTCTL{EGAF}_{=0}\gentil$ holds, so $\styleTCTL{EGAF}_{=0}\gentil$-emptiness is true.
\end{itemize}

We conclude that $\styleTCTL{EGAF}_{=0}\gentil$-emptiness is true iff the values of the counters of the \twoCM{} are unbounded.
\end{proof}

In this section, we have proved the following properties about \PTAiu{}.
Our first result here is that the \EF{}-emptiness for \PTAiu{} is \emph{less} than the same reachability problem in classical TAs without parameters.

Paradoxically, this simpler complexity for one TCTL decision problem (\EF{}) does not make \PTAiu{} a trivial subclass of (P)TAs at all.
On the contrary, we proved that the decidability of \EF{}-emptiness does not extend to the whole TCTL logic by exhibiting a TCTL formula for which deciding the \emph{emptiness} of parameter valuations satisfying it is undecidable, while model-checking TCTL logic is decidable in TAs~\cite{alur-ic-93}.

\section{Proof of concept: Case study}\label{section:casestudy}

To illustrate the usability of \PTAius{}, we describe in this section a case study modeled and verified using \VeryLongVersion{our class of }\PTAius{}.

\paragraph{Software support}
\PTAius{} are natively supported by \imitator{}~\cite{AFKS12}, which is a parametric model checker performing parameter synthesis for parametric timed automata, extended with some useful features such as synchronization, global variables, etc.
\VeryLongVersion{\imitator{} naturally supports \PTAius{} as it is a subclass of PTAs.}

\paragraph{Description}
The idea here is to model a Real-time Transport Protocol (RTP) using \PTAius{}.
RTP is a network protocol usually used to deliver video, audio over a network. RTP is mainly used in Voice over IP, teleconference and since the last few years in systems that involve media streaming.

RTP is typically running over User Datagram Protocol (UDP), which can broadcast data to several clients, and is faster as TCP (Transmission Control Protocol) as it does not provide guarantees for message delivery.

\begin{figure*}[tb]
\centering
\scalebox{.86}{
\begin{tikzpicture}[shorten >=1pt, node distance= 3cm, on grid, auto]
 \node[location] (A)   {\styleloc{idle, notSending}};
  \node [invariant,above] at (A.north) {$\styleclock{\clocky}\leq \styleparam{\param_{rced}}$};
 \node[location] (A1) [right=of A]  {$\loc_{2}$};
  \node [invariant,below] at (A1.south) {$\styleclock{x}\leq \styleparam{\param_v}$};
  \node[location, initial, initial text={$\styleclock{x}:=0$}] (A3) [above=of A1, yshift=-4em]  {$\loc_{1}$};
  \node[location]          (A2)   [right=of A1]           {$\loc_{3}$};
	  \node [invariant,below] at (A2.south) {$\styleclock{\clock}\leq\styleparam{\param_s}$};
 \node[location]          (C)   [right=of A2]           {\styleloc{idle, sending}};
   \node [invariant,above] at (C.north) {\begin{tabular}{c}$\styleclock{\clock}< \styleparam{\param_{send}}$ \\ $\styleclock{\clocky}< \styleparam{\param_{rced}}$ \end{tabular}};
   \node[location]          (D)   [below=of C,yshift=3em]           {\styleloc{askMore, sending}};
     \node [invariant,below] at (D.south) {$\styleclock{\clock}\leq \styleparam{\param_{send}}$};
   \node[location]          (E)   [below=of A,yshift=3em]           {\styleloc{askMore, notSending}};

\path[->]
				(A)  edge   node[above] {\styleact{begin}} (A1)
        (A3) edge node[right,align=center] {\styleact{start} \\ $\styleclock{\clock}:= 0, \styleclock{\clocky}:= 0$} (A1)
				(A1) edge  node[below] {\styleact{sendVideo}}  node[below] {}  (A2)
        (A2) edge node[above] {\styleact{sendSound}}  node[below] {$\styleclock{\clock}:= 0$} (C)
        (C) edge[bend left] node[above] {\styleact{interrupt}} node[below]  {$\styleclock{\clock}:=0$} (A)
        (C) edge[bend left] node[right,align=center] {\styleact{outOfData} \\ $\styleclock{\clocky}:= 0$} (D)
        (D) edge[bend left] node[right] {}(C)
        (A) edge node[below] {} (E)
        (D) edge[bend angle=5, bend left] node[below] {}(E)
        ;

\end{tikzpicture}
}
\caption{Model of a media streaming protocol}
\label{fig:videostreaming}
\end{figure*}
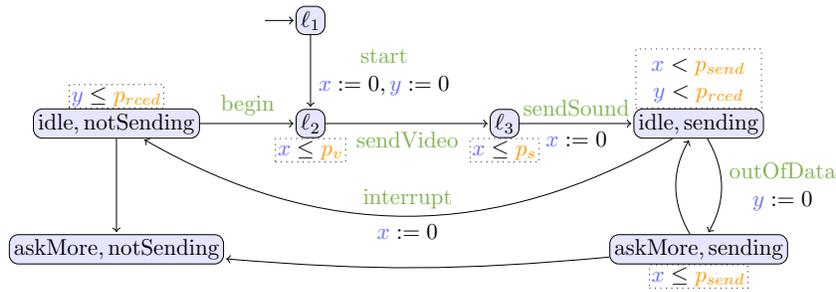

\cref{fig:videostreaming} represents a simplified version of an RTP protocol combined with a Real-Time Control Protocol (RTCP).
A server sends audio and video data to a client, and the client has the possibility to pause the data stream or ask for more data when its buffer is empty.
We use two clocks to model the protocol.
$\styleclock{\clock}$ represents the server, while~$\styleclock{\clocky}$ represents the client. In each location, the first word represents the state of the client, while the second represents the state of the server.
The automaton starts in location~${\color{blue}\styleloc{\loc_1}}$ as the client is waiting for its data stream.
On the \styleact{begin} action, the server first opens the channel for the video within~$\styleparam{\param_v}$ units of time, and the channel for the audio within~$\styleparam{\param_s}-\styleparam{\param_v}$ units of time, assuming otherwise audio and video would not be synchronized at reception by the client.
Then data is streamed for at most~$\styleparam{\param_{send}}$ units of time to prevent overflowing the bandwidth, in location~\styleloc{idle, sending}. At this moment, the server stops sending for an undetermined amount of time. In the meantime, the client's buffer is being emptied. When running \styleact{outOfData}, the client switches to location~\styleloc{askMore, sending} as the server is still sending data. \styleclock{\clocky} is reset and the system has the possibility to switch to location~\styleloc{idle, sending} again if the server is still streaming data, \ie{} the constraint~$\styleclock{\clock}<\styleparam{\param_{send}}$ is still satisfied.
While in~\styleloc{idle, sending}, the client can choose to~\styleact{interrupt} the data stream. When in location~\styleloc{idle, notSending}, the client still uses the data of the buffer, but has to request more data at some point, \ie{} while~$\styleclock{\clocky}<\styleparam{\param_{rced}}$ is satisfied. The procedure from~\styleact{start} is similar to the previously described one.

From locations~\styleloc{askMore, sending} and~\styleloc{idle, notSending} the location~\styleloc{askMore, notSending} is reachable, when the server is not streaming and the client's buffer is empty. This is the bug state of the system.
We are interested in computing the concrete parameter valuations of~$\styleparam{\param_{send}}, \styleparam{\param_{rced}}, \styleparam{\param_{s}}, \styleparam{\param_{v}}$ \st{} the system can reach the ``bad'' state~\styleloc{askMore, notSending}---that is, we aim at performing \EF{}(\styleloc{askMore, notSending)}-synthesis.

\paragraph{Experiments}
We modeled the case study in \cref{fig:videostreaming} in the input language of \imitator{}.
Experiments were conducted with \imitator{} 2.11 ``Butter Kouign-amann'', on a 2.4\,GHz Intel Core i5 processor with\,2 GiB of RAM in a VirtualBox environment running Ubuntu.\ea{running which os?}\footnote{%
  Models and results are available at \url{https://www.imitator.fr/static/ICECCS19/}
}
The synthesis time is less than 1~second with four parameters.

Applying \imitator{} to \cref{fig:videostreaming}, we obtain the following result for \EF{}(\styleloc{askMore, notSending)}-synthesis:

\vspace{-1em}$$
 	\styleparam{\param_s} \geq 0
 	\land \styleparam{\param_v} \geq 0
 	\land \styleparam{\param_{send}} > 0
 	\land \styleparam{\param_{rced}} > 0.
	$$\vspace{-1.6em}

That is, for almost all parameter valuations, there exists an execution of the system such that it reaches the bad location~\styleloc{askMore, notSending}.
This is not surprising, as it depends on the rate of data exchanged and of the connection quality to the network.\ea{ça te va ? je suis pas trop sûr…}\mr{oui}
In other words, this bug state can be reached in any case as the data stream can be blocked at any time, \ie{} the client may have to wait for the video to load.

A more interesting question is to study whether all runs of some valuations may eventually reach the bug location.
This would be worrying, as it would denote that the protocol has no chances of success for these valuations.
Therefore, we focus on \EF{}(\styleloc{askMore, notSending})-synthesis.
This time, we obtain that the set of valuations for which all runs eventually reach \styleloc{askMore, notSending} is empty, and therefore no valuation makes the protocol entirely unsuccessful.

\ea{dommage que la synthèse ne montre pas d'accumulation de contraintes (du genre $\param > \param'$, car ça aurait pu arriver, en tout cas pour \AF{}}

\section{Conclusion}\label{section:conclusion}

We proposed a new parametric timed formalism to reason about timed systems with some uncertain or unknown timing constants, with two interesting positive results.
First, the emptiness of the valuation set for which at least one run reaches a location
\ie{} \EF{}-emptiness
, is decidable in linear time 
which is better than solving the reachability problem for TAs, as it is PSPACE-complete.
Second, we showed that exact synthesis can be achieved in exponential time.

In contrast, we showed that (nested) TCTL-emptiness is undecidable, making \PTAius{}, as model-checking TCTL is decidable for TAs, a formalism at the border between decidability and undecidability.

Our formalism seems to allow for promising practical applications as shown by \cref{section:casestudy}, where we successfully modeled a simple data streaming protocol.
\VeryLongVersion{This case study is interesting as our \PTAiu{} is naturally designed to model such systems where only upper-bound timing constraints are required, as in a data streaming protocol where the client needs a continuous data flow and should not have is buffer empty, causing an interruption when reading the data stream.
The subsequent analysis allowed us to compute the timing constraints that make such a bug state reachable in the model, and further predict a scenario where these constraints are incorporated.}



\paragraph*{Future work}
%
On the theoretical side, the emptiness of some flat TCTL formulas remains open for \PTAius{}, notably \AF{}, \EG{} and \AG{}-emptiness.
Improving the complexity of \EF{}-synthesis is also an interesting direction.

More practically, we are interested in proposing dedicated efficient synthesis algorithms for \PTAius{} (independently of the underlying decidability).



\ifdefined\AuthorVersion
	\newcommand{\CCIS}{Communications in Computer and Information Science}
	\newcommand{\ENTCS}{Electronic Notes in Theoretical Computer Science}
	\newcommand{\FI}{Fundamenta Informormaticae}
	\newcommand{\FMSD}{Formal Methods in System Design}
	\newcommand{\IJFCS}{International Journal of Foundations of Computer Science}
	\newcommand{\IJSSE}{International Journal of Secure Software Engineering}
	\newcommand{\IPL}{Information Processing Letters}
	\newcommand{\JLAP}{Journal of Logic and Algebraic Programming}
	\newcommand{\JLC}{Journal of Logic and Computation}
	\newcommand{\LMCS}{Logical Methods in Computer Science}
	\newcommand{\LNCS}{Lecture Notes in Computer Science}
	\newcommand{\RESS}{Reliability Engineering \& System Safety}
	\newcommand{\STTT}{International Journal on Software Tools for Technology Transfer}
	\newcommand{\TCS}{Theoretical Computer Science}
	\newcommand{\ToPNoC}{Transactions on Petri Nets and Other Models of Concurrency}
	\newcommand{\TSE}{IEEE Transactions on Software Engineering}
	\renewcommand*{\bibfont}{\small}
	\printbibliography[title={References}]
\else
	\bibliographystyle{IEEEtran} 
	\newcommand{\CCIS}{CCIS}
	\newcommand{\ENTCS}{ENTCS}
	\newcommand{\FI}{FI}
	\newcommand{\FMSD}{FMSD}
	\newcommand{\IJFCS}{IJFCS}
	\newcommand{\IJSSE}{IJSSE}
	\newcommand{\IPL}{IPL}
	\newcommand{\JLAP}{JLAP}
	\newcommand{\JLC}{JLC}
	\newcommand{\LMCS}{LMCS}
	\newcommand{\LNCS}{LNCS}
	\newcommand{\RESS}{RESS}
	\newcommand{\STTT}{STTT}
	\newcommand{\TCS}{TCS}
	\newcommand{\ToPNoC}{ToPNoC}
	\newcommand{\TSE}{TSE}
	\bibliography{PTASI}
\fi

\end{document}